\newcommand{\REMOVE}[1]{}
\newcommand{\eps}{\varepsilon}
\newcommand{\dist}{{\rm dist}}
\title{Reconstruction of Weakly Simple Polygons from their Edges}
\titlerunning{Reconstruction of Weakly Simple Polygons from their Edges} 
\author[1]{Hugo A. Akitaya}
\author[1,2]{Csaba D. T\'oth}
\affil[1]{Tufts University, Medford, MA, USA\\
  \texttt{hugo.alves\_akitaya@tufts.edu}}
\affil[2]{California State University Northridge, Los Angeles, CA, USA\\
  \texttt{cdtoth@acm.org}}
\authorrunning{H.\,A. Akitaya and C.\,D. T\'oth} 
\subjclass{F.2.2 Nonnumerical Algorithms and Problems [Geometrical problems and computations]}
\keywords{simple polygon, line segment, geometric graph}
\begin{document}

\maketitle

\begin{abstract}
Given $n$ line segments in the plane, do they form the edge set of a \emph{weakly simple polygon}; that is, can the segment endpoints be perturbed by at most $\varepsilon$, for any $\varepsilon>0$, to obtain a simple polygon? While the analogous question for \emph{simple polygons} can easily be answered in $O(n\log n)$ time, we show that it is NP-complete for weakly simple polygons. We give $O(n)$-time algorithms in two special cases: when all segments are collinear, or the segment endpoints are in general position. These results extend to the variant in which the segments are \emph{directed}, and the counterclockwise traversal of a polygon should follow the orientation.

We study related problems for the case that the union of the $n$ input segments is connected.
(i) If each segment can be subdivided into several segments, find the minimum number of subdivision points to form a weakly simple polygon.
(ii) If new line segments can be added, find the minimum total length of new segments that creates a weakly simple polygon. We give worst-case upper and lower bounds for both problems.
 \end{abstract}

\section{Introduction}
\label{sec:intro}

In the design and analysis of geometric algorithms, the input is often assumed to be in general position. This is justified from the theoretical point of view: degenerate cases can typically be handled without increasing the computational complexity, or symbolic perturbation schemes can reduce any input to one in general position~\cite{Dutch}. In this paper, we present a geometric problem about simple polygons in the plane, which has a straightforward solution if the input is in general position, but is NP-complete otherwise.

Suppose we are given $n$ line segments in the plane. It is easy to decide in $O(n\log n)$ time whether they form a simple polygon by detecting intersections in a line sweep: if the segments are disjoint apart from common endpoints, then they form a plane graph, and a simple traversal can determine whether the graph is a cycle. If the input segments overlap, more than two segments have a common endpoint, or some segment endpoints lie in the interior of another segment, then they definitely do not form a simple polygon, but they might still be perturbed into a simple polygon (i.e., they form a weakly simple polygon). We study
the decision problem for weakly simple polygons in this paper.

\smallskip\noindent{\bf Organization and Results.}
We start with necessary definitions, and formulate the problem of reconstructing a weakly simple polygons from a set of edges (Section~\ref{sec:def}).
We present polynomial-time algorithms when the given segments form a geometric graph or are collinear (Sections~\ref{sec:special}).
The problem in general, however, is strongly NP-hard by a reduction from \textsc{Planar-Monotone-3SAT} (Section~\ref{sec:hard}). Nevertheless,
every set of noncrossing line segments in the plane can be turned into the edge set of a weakly simple polygon by (i) subdividing the edges into several edges, or (ii) inserting new edges.
In Sections~\ref{sec:bounds} we show that if $G=(V,E)$ is Eulerian, the edges can be subdivided $O(n)$ times to obtain a weakly simple Euler tour.
We also show that inserting new edges of total length at most $3\|E\|$  is always sufficient and sometimes necessary to create a weakly simple Euler tour. We conclude with future directions (Section~\ref{sec:conclusions}).
Omitted proofs are available in the Appendix.

\smallskip\noindent{\bf Related Work.}
Reconstruction of simple polygons from partial information (such as vertices, visibility graphs, visibility angles, cross sections) has been studied for decades~\cite{BGS06,BDS11,DMW11,FW90,OR88}. For example, an orthogonal simple polygon can be uniquely reconstruction from its vertices~\cite{OR88}, but if the edges have 3 or more directions, the problem becomes NP-hard~\cite{FW90}. For a simple polygon, the set of all edges (studied in this paper) gives complete information: the cyclic order of the edges is easy to recover. In contrast, a set of edges may correspond to exponentially many weakly simple polygons, and the reconstruction problem becomes nontrivial.
The problems considered in Section~\ref{sec:bounds} are closely related to geometric graph augmentation and subgraph problems: (i) Can a given plane straight-line graph be augmented with new edges into a simple polygon, a Hamiltonian plane graph, or a 2-connected plane graph~\cite{HT03,Rap89,Tot12,UW92}? (ii) Does a given a geometric graph contain certain noncrossing subgraphs (e.g., spanning trees or perfect matchings)~\cite{JW93}?

\section{Preliminaries}
\label{sec:def}

A \emph{polygon} $P=(p_0,\ldots ,p_{n-1})$ is a cyclic sequence of points in the plane  (\emph{vertices}), where every two consecutive vertices are connected by a line segment (\emph{edge}). The cycle of edges can be parameterized by a piecewise linear curve $\gamma: \mathbb{S}^1\rightarrow \mathbb{R}^2$. Polygon $P$ is \emph{simple} if $\gamma$ is a \emph{Jordan curve} (i.e., $\gamma$ is injective); equivalently, if $(p_0,\ldots ,p_{n-1})$ is the plane embedding of a Hamiltonian cycle. Polygon $P$ is \emph{weakly simple} if, for every $\eps>0$, the vertices $p_i$ can be perturbed to points $p_i'$, $\|p_i p_i'\|<\eps$, such that $P'=(p_0',\ldots , p_{n-1}')$ is a simple polygon. The function $\|.\|$ denotes the Euclidean length of a line segment. Equivalently, a polygon given by  $\gamma$ is weakly simple if it can be perturbed to Jordan curve $\gamma':\mathbb{S}^1\rightarrow \mathbb{R}^2$ such that the Fr\'echet distance of the two curves is bounded by $\eps$ (i.e., $\dist_F(\gamma,\gamma')<\eps$)~\cite{CEX15}. We can test whether a polygon $P=(p_0,\ldots ,p_{n-1})$, is simple or weakly simple, respectively, in $O(n)$ time~\cite{Cha91} and $O(n\log n)$ time~\cite{AAET16}.

	\begin{figure}[h]
		\centering
		\def\svgwidth{0.7\columnwidth}
\begingroup%
  \makeatletter%
  \providecommand\color[2][]{%
    \errmessage{(Inkscape) Color is used for the text in Inkscape, but the package 'color.sty' is not loaded}%
    \renewcommand\color[2][]{}%
  }%
  \providecommand\transparent[1]{%
    \errmessage{(Inkscape) Transparency is used (non-zero) for the text in Inkscape, but the package 'transparent.sty' is not loaded}%
    \renewcommand\transparent[1]{}%
  }%
  \providecommand\rotatebox[2]{#2}%
  \ifx\svgwidth\undefined%
    \setlength{\unitlength}{320.60002124bp}%
    \ifx\svgscale\undefined%
      \relax%
    \else%
      \setlength{\unitlength}{\unitlength * \real{\svgscale}}%
    \fi%
  \else%
    \setlength{\unitlength}{\svgwidth}%
  \fi%
  \global\let\svgwidth\undefined%
  \global\let\svgscale\undefined%
  \makeatother%
  \begin{picture}(1,0.18958315)%
    \put(0,0){\includegraphics[width=\unitlength,page=1]{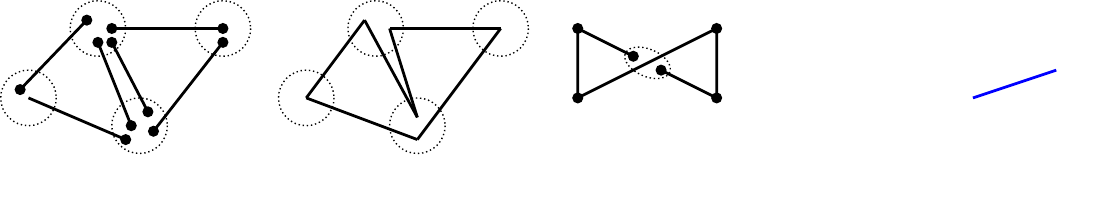}}%
    \put(0.10224691,0.00545448){\color[rgb]{0,0,0}\makebox(0,0)[lb]{\smash{(a)}}}%
    \put(0.34558285,0.00541566){\color[rgb]{0,0,0}\makebox(0,0)[lb]{\smash{(b)}}}%
    \put(0.56145725,0.00493644){\color[rgb]{0,0,0}\makebox(0,0)[lb]{\smash{(c)}}}%
    \put(0,0){\includegraphics[width=\unitlength,page=2]{examples.pdf}}%
    \put(0.7417634,0.00493641){\color[rgb]{0,0,0}\makebox(0,0)[lb]{\smash{(d)}}}%
    \put(0,0){\includegraphics[width=\unitlength,page=3]{examples.pdf}}%
    \put(0.91643588,0.00493641){\color[rgb]{0,0,0}\makebox(0,0)[lb]{\smash{(e)}}}%
    \put(0,0){\includegraphics[width=\unitlength,page=4]{examples.pdf}}%
  \end{picture}%
\endgroup%
	\caption{
    (a) A multi-set of line segments. Circles indicate common segment endpoints.
    (b) A weakly simple Euler tour.
    (c) An Eulerian graph that has no weakly simple Euler tour.
    An edge subdivision (d) or the insertion of two edges (e), yields a weakly simple Euler tour.}
	\label{fig:intro}
	\end{figure}

We define the \textsc{WeaklySimplePolygonReconstruction} (WSPR) problem as the following decision problem: Given a multiset $E$ of line segments in $\mathbb{R}^2$, does there exist a weakly simple polygon $P$ whose edge multiset is $E$? For a multiset $E$ of \emph{directed} segments, we also define \textsc{Directed-WSPR} that asks whether there exists a weakly simple polygon $P=(p_0,\ldots , p_{n-1})$ such that $\{p_ip_{i+1\mod n}: 0\leq i\leq n-1\}=E$.
In both undirected and directed variants, we represent the input segments as a straight-line multigraph $G=(V,E)$, where $V$ is the set of all segment endpoints. Note that $G$ may have overlapping edges, and an edge may pass through vertices, so it need not be a \emph{geometric graph}.

\smallskip\noindent{\bf Two Necessary Conditions.}
Two line segments \emph{cross} if they share exactly one interior point.
If the multiset of segments $E$ forms a weakly simple polygon, then no two segment cross.
This condition can be easily tested in $O(|E|\log |E|)$ time by a line sweep.

If there is a  weakly simple polygon $P=(p_0,\ldots ,p_{n-1})$ with edge set $E$, then $P$ is an Euler tour of the graph $G=(V,E)$. (However, an Euler tour need not be weakly simple; see Fig.~\ref{fig:intro}(b)).
A graph is Eulerian if and only if it is connected and every vertex has even degree.
A \emph{simple} (undirected) plane graph $G$ is Eulerian if and only if its dual graph is bipartite.
This result extends to plane \emph{multi}graphs when an edge of multiplicity $k$ is embedded as $k$ interior-disjoint Jordan arcs, that enclose $k-1$ faces. 
A directed graph is Eulerian if and only if all vertices are part of the same strongly connected component and if, for each vertex, the in-degree equals the out-degree.

\section{Special Cases}
\label{sec:special}

We show that both WSPR and Directed-WSPR admit polynomial-time algorithms
in the special cases that (i) $G=(V,E)$ is a \emph{simple geometric graph}, that is, no two edges overlap,
and no vertex lies in the interior of an edge, and (ii)  all edges in $G=(V,E)$ are collinear. We assume that
$G$ satisfies both necessary conditions.

\subsection{Geometric Graphs}
\label{sec:non-overlap}

Note that in an Eulerian geometric graph the boundary of each face is a weakly simple circuit, where repeated vertices are
possible, but there are no repeated edges.
The following is a modified version of Hierholzer's algorithm~\cite{Hie73}.
It computes a weakly simple Euler tour $P$ in the Eulerian graph $G$,
or reports that no such tour exists.

\smallskip
\noindent\textbf{Algorithm A $(G)$}
\begin{enumerate}\itemsep -1pt
	\item
	2-color the faces of $G$ white and gray so that the outer face is white;
	and create a list $L$ of circuits on the boundaries of the gray faces.
	\item
	If $G$ is directed and the edges around a gray face do not form a directed circuit or if there exist both clockwise (cw) and counterclockwise (ccw) circuits in $L$, report that $G$ has no weakly
	simple Euler tour.
	\item
	Choose an arbitrary circuit in $L$, remove it from $L$ and call it $P$.
	\item
	While there is a  circuit in $L$, do:
	\begin{enumerate}
		\item[4.1]
		Find two consecutive edges, $(u,v)$ and $(v,w)$, along a white face
		such that $(u,v)\in P$ and $(v,w)\in C$ for some $C\in L$.
		\item[4.2]
		Remove $C$ from the list $L$, and merge $C$ and $P$ by traversing $C$ starting with the edge $(v,w)$ followed by the traversal of $P$ that ends with the edge $(u,v)$; see Figure~\ref{fig:geometric}(a).
	\end{enumerate}
	\item
	Return $P$.
\end{enumerate}

\begin{figure}[h]
	\centering
	\def\svgwidth{0.8\columnwidth}
\begingroup%
  \makeatletter%
  \providecommand\color[2][]{%
    \errmessage{(Inkscape) Color is used for the text in Inkscape, but the package 'color.sty' is not loaded}%
    \renewcommand\color[2][]{}%
  }%
  \providecommand\transparent[1]{%
    \errmessage{(Inkscape) Transparency is used (non-zero) for the text in Inkscape, but the package 'transparent.sty' is not loaded}%
    \renewcommand\transparent[1]{}%
  }%
  \providecommand\rotatebox[2]{#2}%
  \ifx\svgwidth\undefined%
    \setlength{\unitlength}{398.63709938bp}%
    \ifx\svgscale\undefined%
      \relax%
    \else%
      \setlength{\unitlength}{\unitlength * \real{\svgscale}}%
    \fi%
  \else%
    \setlength{\unitlength}{\svgwidth}%
  \fi%
  \global\let\svgwidth\undefined%
  \global\let\svgscale\undefined%
  \makeatother%
  \begin{picture}(1,0.25778112)%
    \put(0,0){\includegraphics[width=\unitlength,page=1]{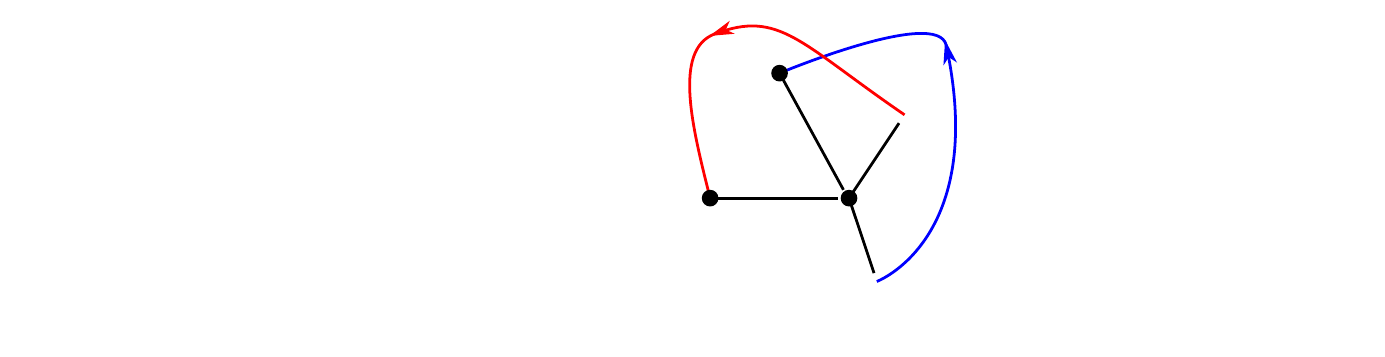}}%
    \put(0.53507714,0.20073124){\color[rgb]{0,0,0}\makebox(0,0)[lb]{\smash{$a$}}}%
    \put(0,0){\includegraphics[width=\unitlength,page=2]{geometricv2.pdf}}%
    \put(0.66503313,0.16900443){\color[rgb]{0,0,0}\makebox(0,0)[lb]{\smash{$b$}}}%
    \put(0.60503215,0.14874008){\color[rgb]{0,0,0}\rotatebox{-4.03533038}{\makebox(0,0)[lb]{\smash{...}}}}%
    \put(0.58521501,0.09390416){\color[rgb]{0,0,0}\rotatebox{-30.36505974}{\makebox(0,0)[lb]{\smash{...}}}}%
    \put(0.63738905,0.11964697){\color[rgb]{0,0,0}\rotatebox{-95.53446767}{\makebox(0,0)[lb]{\smash{...}}}}%
    \put(0.47246663,0.18747469){\color[rgb]{0,0,0}\makebox(0,0)[lb]{\smash{$\pi_1$}}}%
    \put(0.68982335,0.21724399){\color[rgb]{0,0,0}\makebox(0,0)[lb]{\smash{$\pi_2$}}}%
    \put(0.48681599,0.10553316){\color[rgb]{0,0,0}\makebox(0,0)[lb]{\smash{$c$}}}%
    \put(0.60463471,0.04813375){\color[rgb]{0,0,0}\makebox(0,0)[lb]{\smash{$d$}}}%
    \put(0.58039706,0.12513392){\color[rgb]{0,0,0}\makebox(0,0)[lb]{\smash{$v$}}}%
    \put(0,0){\includegraphics[width=\unitlength,page=3]{geometricv2.pdf}}%
    \put(0.79596606,0.20073124){\color[rgb]{0,0,0}\makebox(0,0)[lb]{\smash{$a'$}}}%
    \put(0,0){\includegraphics[width=\unitlength,page=4]{geometricv2.pdf}}%
    \put(0.92592205,0.16900442){\color[rgb]{0,0,0}\makebox(0,0)[lb]{\smash{$b'$}}}%
    \put(0.72934186,0.18747469){\color[rgb]{0,0,0}\makebox(0,0)[lb]{\smash{$\pi_1'$}}}%
    \put(0.95344699,0.21564945){\color[rgb]{0,0,0}\makebox(0,0)[lb]{\smash{$\pi_2'$}}}%
    \put(0.74770493,0.10553316){\color[rgb]{0,0,0}\makebox(0,0)[lb]{\smash{$c'$}}}%
    \put(0.86552362,0.04813374){\color[rgb]{0,0,0}\makebox(0,0)[lb]{\smash{$d'$}}}%
    \put(0.83727233,0.12112025){\color[rgb]{0,0,0}\makebox(0,0)[lb]{\smash{$v'$}}}%
    \put(0,0){\includegraphics[width=\unitlength,page=5]{geometricv2.pdf}}%
    \put(0.82924498,0.07496297){\color[rgb]{0,0,0}\makebox(0,0)[lb]{\smash{$v''$}}}%
    \put(0.75988162,0.18887415){\color[rgb]{0,0,0}\makebox(0,0)[lb]{\smash{}}}%
    \put(0,0){\includegraphics[width=\unitlength,page=6]{geometricv2.pdf}}%
    \put(0.14520649,0.14334343){\color[rgb]{0,0,0}\makebox(0,0)[lb]{\smash{$v$}}}%
    \put(0,0){\includegraphics[width=\unitlength,page=7]{geometricv2.pdf}}%
    \put(0.36468932,0.13050753){\color[rgb]{0,0,0}\makebox(0,0)[lb]{\smash{$v$}}}%
    \put(0,0){\includegraphics[width=\unitlength,page=8]{geometricv2.pdf}}%
    \put(0.12377676,0.2151914){\color[rgb]{0,0,0}\makebox(0,0)[lb]{\smash{$C$}}}%
    \put(0.07821103,0.08297638){\color[rgb]{0,0,0}\makebox(0,0)[lb]{\smash{$P$}}}%
    \put(0,0){\includegraphics[width=\unitlength,page=9]{geometricv2.pdf}}%
    \put(-0.00069265,0.1555017){\color[rgb]{0,0,0}\makebox(0,0)[lb]{\smash{$u$}}}%
    \put(0.04873807,0.24337296){\color[rgb]{0,0,0}\makebox(0,0)[lb]{\smash{$w$}}}%
    \put(0,0){\includegraphics[width=\unitlength,page=10]{geometricv2.pdf}}%
    \put(0.2181561,0.1555017){\color[rgb]{0,0,0}\makebox(0,0)[lb]{\smash{$u$}}}%
    \put(0.25542859,0.24337296){\color[rgb]{0,0,0}\makebox(0,0)[lb]{\smash{$w$}}}%
    \put(0,0){\includegraphics[width=\unitlength,page=11]{geometricv2.pdf}}%
    \put(0.28858606,0.09022196){\color[rgb]{0,0,0}\makebox(0,0)[lb]{\smash{$P$}}}%
    \put(0.17285651,0.00784197){\color[rgb]{0,0,0}\makebox(0,0)[lb]{\smash{(a)}}}%
    \put(0.69992543,0.00784197){\color[rgb]{0,0,0}\makebox(0,0)[lb]{\smash{(b)}}}%
  \end{picture}%
\endgroup%
	\caption{(a) Merging two cycles. The vertices circled by the dotted ellipse correspond to the same vertex $v$. (b) If $v$ has two consecutive incoming edges $ (a,v) $ and $ (c,v) $, $G$ does not admit a weakly simple Euler tour.}
	\label{fig:geometric}
\end{figure}

\begin{theorem}	\label{thm:GeomGraph}
A simple geometric graph $G=(V,E)$ admits a weakly simple Euler tour if and only if $G$ is Eulerian and, if $G$ is directed, the circular order of edges around each vertex alternates between incoming and outgoing.
A weakly simple Euler tour, if exists, can be computed in $O(|E|)$ time.
\end{theorem}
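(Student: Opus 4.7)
The plan is to prove the two directions of the characterization separately and then analyze Algorithm~A to obtain the linear time bound.

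For necessity, if $P$ is a weakly simple Euler tour then $G$ is connected and every vertex has even degree, so $G$ is Eulerian. In the directed case, suppose for contradiction that the circular order at some vertex $v$ contains two consecutive incoming edges $(a,v),(c,v)$ (Figure~\ref{fig:geometric}(b)). Under any $\eps$-perturbation the copies of $v$ must be connected in the polygon by short paths, and the pairing these paths induce between incoming and outgoing edges must respect the rotation at $v$; two consecutive incoming edges force two outgoing edges to be consecutive as well, and then either an edge must be repeated or the perturbed tour crosses itself in the neighborhood of $v$, contradicting weak simplicity.

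For sufficiency, I verify that Algorithm~A outputs a weakly simple Euler tour. Since $G$ is an Eulerian plane multigraph, its dual is bipartite (from Section~\ref{sec:def}) and step~1 produces a valid 2-coloring; each edge borders exactly one gray and one white face, so the gray face boundaries partition $E$ into weakly simple circuits, by the observation preceding the algorithm. In the directed case, the alternation hypothesis is equivalent to saying that every gray face boundary is a directed circuit and that any two gray circuits sharing a white face have compatible cw/ccw orientations, which is exactly the condition checked in step~2. The main invariant is that, as long as $L\ne\emptyset$, step~4.1 finds a consecutive pair $(u,v),(v,w)$ on a white face boundary with $(u,v)\in P$ and $(v,w)\in C\in L$; this follows from connectivity of $G$, because the edges of $P$ and of $\bigcup L$ together cover $E$, hence some vertex $v$ is incident to edges of both, and at $v$ two such adjacent edges in the rotation cannot be separated by a gray face (gray faces are bounded by a single circuit) so they are separated by a white face, exhibiting the desired pair.

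The main technical obstacle is showing that each merge preserves weak simplicity, and that iterated merges compose into a single weakly simple polygon. I would handle this by carrying out the perturbation explicitly: around the vertex $v$ of each merge, split $v$ into two copies $v',v''$ displaced into the incident white face, so that $P$ uses $v'$ and $C$ uses $v''$ while the two short detours sit on opposite sides of the shared white wedge (Figure~\ref{fig:geometric}(a)); choosing the displacements in a strictly decreasing scale $\eps_1>\eps_2>\cdots$ one per merge guarantees that later perturbations stay inside arbitrarily small neighborhoods of the current vertex copies and cannot interact with earlier detours or with the remaining graph. Iterating, the final tour visits every edge once and admits an $\eps$-perturbation to a simple polygon for every $\eps>0$.

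For the running time, computing a doubly connected edge list, 2-coloring the faces, and extracting the list $L$ of gray-face circuits all take $O(|E|)$ time. I would maintain, for each white face, a linked list of its boundary edges together with a pointer to an edge currently belonging to $P$ (if any), and for each circuit in $L$ a pointer to an edge incident to some already-processed white face; then step~4.1 runs in $O(1)$ amortized and each edge is touched $O(1)$ times across all merges, giving a total cost of $O(|E|)$.
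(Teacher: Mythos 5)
Your sufficiency argument and your running-time analysis follow essentially the same route as the paper: the gray-face boundaries serve as the atomic weakly simple circuits, circuits are merged across a shared white face by a local re-routing near the merge vertex, and the dual-graph traversal plus constant-time merges give $O(|E|)$. Your extra justification of why step~4.1 always succeeds (two rotation-adjacent edges from $P$ and from some $C\in L$ cannot be separated by a gray face) is a reasonable addition that the paper leaves implicit.

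The genuine gap is in the necessity direction for directed graphs. You assert that two consecutive incoming edges $(a,v)$ and $(c,v)$ force either a repeated edge or a self-crossing \emph{in the neighborhood of} $v$. That is false in general: the tour's two visits through $v$ can pair the incoming edges with outgoing edges in a locally noncrossing, nested way --- e.g.\ with rotation $(a,v),(c,v),(v,d),(v,b)$ the pairs $\{(a,v),(v,b)\}$ and $\{(c,v),(v,d)\}$ give nested chords, so the perturbation near $v$ is perfectly simple and no edge is repeated. The obstruction is global, and supplying it is the real content of this direction in the paper: the tour contains edge-disjoint paths $(a,v,b)$ and $(c,v,d)$, the perturbation of the subpath $\pi_1=(v,b,\ldots,c,v)$ is an arc $\pi_1'=(v',b',\ldots,c',v'')$ with two \emph{distinct} copies of $v$, and the closed curve $\pi_1'\cup v'v''$ separates $a$ from $d$. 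The remaining subpath $\pi_2=(d,\ldots,a)$ must therefore cross this closed curve; it cannot cross $\pi_1'$ because the perturbed polygon is simple, and it cannot cross the short segment $v'v''$ because $(a,v)$ and $(c,v)$ are consecutive in the rotation and hence bound a common face, so no edge of $G$ passes between them near $v$. Without this separation argument (or an equivalent global one), the ``only if'' half of the theorem is not established; the rest of your proposal stands.
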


\begin{proof}
If $G=(V,E)$ is undirected, the algorithm construct an Euler tour $P$~\cite{Hie73}; and the tour is weakly simple by construction.
In the remainder of the proof, we consider a directed Eulerian geometric graph $G$.
	First, we show that if $G$ satisfies the conditions of Theorem~\ref{thm:GeomGraph}, then Algorithm~A returns a weakly simple Euler tour.
	If the circular order of edges around each vertex alternates between incoming and outgoing, then all edges on the boundary of a face of $G$
	have the same orientation (ccw or cw), and adjacent faces have opposite orientations.
	Without loss of generality, the edges on the boundaries of white (resp., gray) faces are oriented ccw (resp., cw). Hnece, the condition in step~2 of the algorithm is satisfied.
	
	We show that the Euler tour $P$ constructed by Algorithm A is weakly simple, that is, it can be perturbed into a simple polygon. Initially, each circuit $C=(p_0,\ldots , p_{k-1})$ in $L$ is the boundary of a gray face, and hence it is a simple polygon. Let $C'=(p_0',\ldots , p_{k-1}')$
	be perturbation obtained by moving each point $p_i$ to the interior of the face along an angle bisector of $\angle p_{i-1}p_ip_{i+1}$.
	Initially $P$ is a weakly simple polygon (one of the circuits).
	It is enough to show that Step~4.2 maintains a weakly simple polygon, that is, when we merge $P$ and a circuit $C$, their Jordan curve perturbations $P'$ and $C'$ can also be combined. Edges $(u,v)$ and $(v,w)$ are adjacent to a common white face $f_0$; they correspond to an edge $(p_u,p_v)$ in $P'$ and $(q_v,q_w)$ in $C'$, where both $p_v$ and $q_v$ lie in the $\eps$-neighborhood of $v$ in two different gray faces adjacent to $f_0$. We can modify $P'$ and $C'$ in the $\eps$-neighborhood of $v$, by removing a short Jordan arc from each and reconnecting them across the white face $f_0$ into a single Jordan curve.
	By induction, we can obtain a Jordan curve within $\eps$ Fr\`echet distance from the output polygon $P$. Hence, the algorithm returns a weakly simple Euler tour $P$.
	
	Now, we show that if $G$ has a vertex $v$ with two consecutive incoming (resp., outgoing) edges
	$(a,v)$ and $(c,v)$, then $G$ does not admit a weakly simple Euler tour.
	Suppose, for contradiction, that there exists a weakly simple Euler tour $P$.
	Since both $(a,v)$ and $(c,v)$ are directed into $v$, the tour $P$ contains
	edge-disjoint paths $(a,v,b)$ and $(c,v,d)$. Since $P$ is weakly simple,
	the circular order of these four edges incident to $v$ must be as shown in Figure~\ref{fig:geometric}(b).
    The polygon must contain edge-disjoint paths $\pi_1=(v,b,\ldots, c,v)$ and $\pi_2=(d,\ldots, a)$.
	The perturbation of $\pi_1$ is $\pi_1'=(v',b',\ldots , c',v'')$ where $v'\neq v''$.
	Note that $a$ and $d$ are on opposite sides of the cycle $\pi'\cup v'v''$.
    The perturbation of $\pi_2$, path $\pi_2'$, can intersect neither $\pi_1'$ nor $v'v''$, because
	$(a,v)$ and $(c ,v)$ are adjacent to the same face. Hence $P$ is not weakly simple.
	

	Finally, Algorithm A runs in $O(|E|)$ time.
	Step 1 and 2 can be done by traversing the dual graph of $G$.
	Step 4 executes $O(|E|)$ merges, each of which takes constant time.
\end{proof}



%
%
%

\begin{corollary}	\label{cor:GeomGraph}
A geometric multigraph $G=(V,E)$ admits a weakly simple Euler tour if and only if $G$ is Eulerian. A weakly simple Euler tour, if exists, can be computed in $O(|E|)$ time.
\end{corollary}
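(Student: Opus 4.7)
The plan is to reduce the multigraph case to the simple geometric graph case handled by Theorem~\ref{thm:GeomGraph}. The ``only if'' direction is immediate: a weakly simple polygon is an Euler tour of $G$, so $G$ must be Eulerian. For the ``if'' direction, I would proceed as follows.

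First, I would produce a topologically well-defined plane embedding of the multigraph. For each pair of vertices $u,v$ joined by $k\geq 2$ parallel straight-line edges (which overlap in the drawing of $G$), replace the $k$ copies with $k$ interior-disjoint Jordan arcs lying in an $\eps'$-neighborhood of the segment $uv$, for an arbitrarily small $\eps' > 0$. This creates $k-1$ new thin faces between consecutive arcs, and yields a plane multigraph $G'$ in which no two edges overlap and no vertex lies in the interior of an edge. Crucially, $G'$ is Eulerian iff $G$ is, and the multigraph remark in Section~\ref{sec:def} guarantees that its dual is bipartite, so the faces of $G'$ admit a proper 2-coloring with the outer face white.

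Next, run (the undirected version of) Algorithm~A on $G'$. By the same argument as in the proof of Theorem~\ref{thm:GeomGraph}, the output $P'$ is an Euler tour of $G'$ whose perturbation within the $\eps'$-neighborhoods of the vertices is a Jordan curve. Contracting each bundle of $k$ near-parallel arcs back to the straight segment $uv$ changes the drawing by at most $O(\eps')$ in Fr\'echet distance, so the resulting edge sequence on the vertex set of $G$ is a weakly simple Euler tour: given any target $\eps>0$, choose $\eps'$ small enough and then further perturb the vertices (using the Jordan-curve certificate for $P'$) to obtain a simple polygon within $\eps$ of $P$.

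For the running time, the perturbation step touches each edge a constant number of times, and Algorithm~A itself runs in $O(|E|)$ time on $G'$ because $|E(G')|=|E(G)|$ and the dual-graph traversal is linear. The step I expect to require the most care is the bookkeeping around multi-edges: arguing that the 2-coloring of $G'$ agrees with a combinatorial 2-coloring of the \emph{abstract} faces of $G$ (where each multi-edge of multiplicity $k$ formally contributes $k-1$ digonal faces), and that Algorithm~A never needs to ``choose'' between the $k$ parallel copies in a way that would depend on the perturbation $\eps'$. Once this is verified, merging steps~4.1--4.2 act uniformly on bundles of parallel edges, so the correctness and the linear-time bound transfer to $G$.
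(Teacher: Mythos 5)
Your proposal is correct and follows essentially the same route as the paper: separate each bundle of $k$ coincident parallel edges into $k$ interior-disjoint arcs in a small neighborhood of the segment, then invoke Theorem~\ref{thm:GeomGraph}. The paper realizes the arcs more cheaply as two-edge polylines (subdividing each copy at a distinct point near the midpoint of $e$), which yields a genuine straight-line simple geometric graph with $|V|+2|E|$ vertices, so Theorem~\ref{thm:GeomGraph} applies verbatim and the contraction/limit bookkeeping you flag as the delicate step is not needed.
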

\begin{proof}
Replace every edge $e$ of multiplicity $k$ by $k$ edge-disjoint paths of length two whose interior points are close to the midpoint of $e$. We obtain a simple Eulerian geometric graph with $|V|+2|E|$ vertices. Theorem~\ref{thm:GeomGraph} completes the proof.
\end{proof}
\begin{remark}
In the case that $G=(V,E)$ is a directed multigraph, replace every directed edge $(u,v)$ of multiplicity $k$ by edge-disjoint paths $(u,w_i,v)$, with new (subdivision) vertices $w_i$, $i=1,\ldots k$, and denote by $G'$ the resulting simple directed graph. The alternating direction condition of Theorem~\ref{thm:GeomGraph} requires that the multiplicity of $(u,v)$ and $(v,u)$ differ by at most one. If their multiplicities differ by exactly one, then there is a unique way to interleave the replacement paths between $u$ and $v$. In fact, if any edge of $G$ has odd multiplicity, the alternating direction condition determines the cyclic order of all paths $(u,w_i,v)$, and we can apply Theorem~\ref{thm:GeomGraph} for $G'$. If, however, all edges of $G$ have even multiplicity, then there are two possibilities for the cyclic orders, both of which yield weakly simple Euler tours by Theorem~\ref{thm:GeomGraph}.
\end{remark}

\subsection{Collinear Line Segments}
\label{sec:collinear}
%

\begin{theorem}\label{thm:collinear}
	If all edges of a graph $G=(V,E)$ are collinear, then
	every Euler tour of $G$ is a weakly simple polygon.
\end{theorem}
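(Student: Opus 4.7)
The plan is to build an explicit small perturbation of the Euler tour and verify that it is a simple polygon. Without loss of generality let $\ell$ be the $x$-axis and write $v_i=(x_i,0)$ for $i=0,1,\dots,m-1$. I would perturb each vertex vertically, setting $v_i'=(x_i,h_i)$ with $|h_i|<\eps$, so that every edge of $P'$ sits inside the thin strip $\ell\times(-\eps,\eps)$. The heights would be chosen so that distinct visits to the same vertex of $G$ receive distinct heights: enumerate the visits to each vertex $v$ in tour order and assign the $k$-th visit a symmetric small height in $(-\eps,\eps)$, for instance proportional to $2k-\deg(v)/2-1$. This immediately makes all perturbed vertices distinct points in the plane.

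The proof then reduces to showing that no two non-adjacent edges of $P'$ meet. Two non-adjacent edges whose $x$-projections on $\ell$ are disjoint cannot intersect. The substantive case is a pair of edges whose $x$-projections overlap: each is a nearly horizontal segment of slope $O(\eps)$, and the two segments cross iff the sign of the height difference at the corresponding pair of endpoints flips between the two ends of the overlap. So the task becomes: choose the $h_i$ so that for every overlapping pair of non-adjacent edges these sign conditions are simultaneously satisfied.

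Because all edges of $G$ lie on one line, each overlap between two edges of $P$ is really two passes the tour makes over a common sub-interval of $\ell$. The Eulerian property ensures that, at every point of $\ell$, these passes can be grouped into matched left-to-right/right-to-left pairs. I would convert this pairing into a nested layering of edges within the strip $\ell\times(-\eps,\eps)$, assigning an unused layer to each pass. Once such a layering is in place, monotone heights with respect to the layer index automatically satisfy the sign conditions, and $P'$ is simple.

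The main obstacle is justifying the existence of a consistent global layering for an arbitrary Euler tour. My approach would be induction on $|E|$: peel off an innermost closed sub-walk of $P$ (such as a short back-and-forth between two vertices that can be realized with just two fresh layers), apply the induction hypothesis to the remaining Eulerian tour, and then splice the sub-walk back in at layers not yet occupied near its attachment vertex. The delicate step is checking that the spliced-in piece does not violate any sign condition on surviving overlapping pairs; this is where I expect the bulk of the combinatorial bookkeeping, leaning on the collinearity of edges and the Eulerian parity to keep the layering coherent.
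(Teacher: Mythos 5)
Your plan has a genuine gap at exactly the step you flag as delicate: you never establish that a consistent global layering of the passes exists for an arbitrary Euler tour, and the induction you sketch for it is not clearly well-founded. Peeling off ``an innermost closed sub-walk\dots such as a short back-and-forth between two vertices'' presupposes that such a sub-walk exists, but an Euler tour of collinear edges need not contain any immediate retracing: take vertices $a<b<c$ with edges $ab$, $bc$, and the long edge $ac$ (which covers $b$); the tour $a\to b\to c\to a$ has no back-and-forth to peel. Even granting a decomposition, the splice-back step --- checking that the re-inserted sub-walk violates no sign condition against all surviving overlapping pairs --- is the entire content of the theorem and is left as ``combinatorial bookkeeping.'' Your initial symmetric height assignment (the $k$-th visit of $v$ gets height proportional to $2k-\deg(v)/2-1$) also does not work on its own, since it depends only on per-vertex visit counts and can force non-adjacent overlapping edges to cross; you implicitly acknowledge this by retreating to the unproven layering.

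The paper avoids all of this with one observation that makes the layering problem disappear: assign the height of the $i$-th vertex \emph{of the tour} (not the $k$-th visit of a vertex) to be $i\eps/(2n)$, strictly increasing along the tour, starting from a leftmost vertex $p_0$. The open perturbed path $(p_0',\ldots,p_{n-1}')$ is then strictly $y$-monotone, so it cannot self-intersect --- no case analysis on overlapping projections, no parity or nesting argument is needed. The only edge that could cause trouble is the closing edge $(p_{n-1}',p_0')$, and it is routed as a one-bend polyline through a point to the left of and above the bounding box of everything else. If you replace your symmetric, per-vertex heights with tour-order-monotone heights and handle the single closing edge separately, your strip construction becomes a complete proof; as written, the argument is not complete.
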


\begin{wrapfigure}{r}{0.28\textwidth}
\vspace{-20pt}
  \begin{center}
    \includegraphics[width=0.25\textwidth]{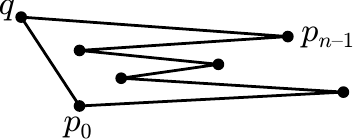}
    \caption{Every collinear Eulerian tour can be transformed in a $y$-monotonic simple polygon.}
	\label{fig:collinear}
  \end{center}
  \vspace{-20pt}
\end{wrapfigure}
\noindent{\bf Proof.}
	Without loss of generality, assume that all vertices are on the $x$-axis. Let $\eps>0$ be given.
	Let $P=(p_0,\ldots,p_{n-1})$ be an Euler tour of $G$, and let $p_0$ be a leftmost vertex.
	For each vertex $p_i$, $i\in\{0,\ldots,n-1\}$, create a point $p_i'$
	with $x(p_i')=x(p_i)$ and $y(p_i')=i\eps/(2n)$.
	The polygonal path $(p_0',\ldots,p_{n-1}')$ is strictly $y$-monotonic and
	therefore does not cross itself. The edge $(p_{n-1}',p_0')$ can be realized as
	a one-bend polyline $(p_{n-1}',q,p_0')$ with $q=(-\eps/2,\eps/2)$, which is outside
	of the axis-aligned bounding box of all other edges.
	Therefore $P'=(p_0',\ldots,p_{n-1}')$, illustrated in Figure~\ref{fig:collinear},
	is a simple polygon where $\dist_F(P,P')< \eps$.
\qed

\section{NP-Completeness}
\label{sec:hard}

In this section we analyze the general case of WSPR.
First we discuss the undirected case and then the direct version.

\begin{lemma}
	\label{lem:np}
	Both WSPR and Directed-WSCR are in NP.
\end{lemma}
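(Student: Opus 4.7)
The plan is to exhibit, for each of the two problems, a polynomial-size certificate together with a polynomial-time verification procedure, and then invoke the fact stated in Section~\ref{sec:def} that weak simplicity of an explicitly given polygon $P=(p_0,\ldots,p_{n-1})$ can be tested in $O(n\log n)$ time~\cite{AAET16}.

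For the undirected version, given an instance $E=\{e_1,\ldots,e_n\}$, the natural certificate is a cyclic sequence $\sigma=(p_0,\ldots,p_{n-1})$ of points in $\mathbb{R}^2$, intended to be the vertex sequence of a weakly simple polygon whose edge multiset is $E$. Since each $p_i$ is an endpoint of an input segment, $\sigma$ has size $O(n)$ and uses only coordinates already present in the input, so the certificate is polynomially bounded in the input size. To verify it, I would first read off the multiset of edges $\{p_ip_{i+1\bmod n}:0\leq i\leq n-1\}$ and check, in $O(n\log n)$ time after sorting, that it equals $E$ as a multiset; and then apply the $O(n\log n)$ weak-simplicity test of~\cite{AAET16} to $(p_0,\ldots,p_{n-1})$. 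Acceptance on some certificate is, by definition of WSPR, equivalent to the existence of a weakly simple polygon with edge multiset $E$.

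For Directed-WSPR the certificate is again a cyclic sequence $\sigma=(p_0,\ldots,p_{n-1})$, and the verification is identical except that, when comparing the edges of $\sigma$ with $E$, each $p_ip_{i+1\bmod n}$ is treated as the directed segment from $p_i$ to $p_{i+1\bmod n}$ and compared with the multiset of directed input segments. This preserves the $O(n\log n)$ running time.

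There is no real obstacle; the only subtle point is that a weakly simple polygon is specified by its vertex sequence rather than by a geometric drawing, so the certificate must list vertices (with multiplicity) in cyclic order rather than, say, only describing a combinatorial Euler tour of $G=(V,E)$, which could otherwise be ambiguous when several edges share endpoints. Once the certificate is phrased in this way, both size and verification time are polynomial, establishing membership in NP for both problems.
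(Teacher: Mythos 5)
Your proposal is correct and matches the paper's argument: both use the vertex sequence of the polygon as the certificate, verify that its edge multiset coincides with $E$ (equivalently, that it is an Euler tour of $G$), and then invoke the $O(n\log n)$ weak-simplicity test of~\cite{AAET16}. Your version simply spells out the certificate size and the multiset comparison in more detail than the paper does.
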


\begin{proof}
    Given a polygon $P=(p_0,\ldots, p_{n-1})$ and a (directed) straight-line multigraph $G=(V,E)$,
	we can check whether $P$ is a (directed) Euler tour in $G$ in $O(|E|)$ time, and whether $P$ is weakly simple in $O(n\log n)$ time~\cite{AAET16}.
\end{proof}

We prove that both directed and undirected WSPC are strongly NP-hard in the general case by a reduction from \textsc{Planar-Monotone-3SAT}, which is strongly NP-hard~\cite{BK12}.
An instance of \textsc{Planar-Monotone-3SAT} consists of a plane bipartite graph $G_B$
whose partite sets are variables nodes and clauses nodes. The variable nodes are on the $x$-axis, the clause nodes are above or below the $x$-axis; each clause is adjacent to three variables. A clause is \emph{positive} if it lies above the $x$-axis, and \emph{negative} otherwise.
\textsc{Planar-Monotone-3SAT} asks if there is a binary assignment from $\{$\texttt{true},\texttt{false}$\}$ to the set of variables such that every positive clause is adjacent to at least one \texttt{true} variable and every negative clause is adjacent to at least one \texttt{false} variable.

\begin{lemma}
	\label{lem:undirected}
	Undirected WSPR is NP-hard.
\end{lemma}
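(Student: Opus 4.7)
The plan is to reduce from Planar-Monotone-3SAT. Given a formula $\varphi$ together with its rectilinear planar embedding on an $O(n)\times O(n)$ integer grid, I will construct in polynomial time a multiset $E$ of line segments with integer coordinates such that $E$ is the edge set of a weakly simple polygon if and only if $\varphi$ is satisfiable. I place three kinds of gadgets: a variable gadget at each variable node, a wire gadget along each edge of $G_B$, and a clause gadget at each clause node. The underlying multigraph will be Eulerian by construction (every vertex will have even degree), so only weak simplicity is in question.

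The basic building block is a \emph{choice vertex}: a degree-$4$ vertex $v$ whose four incident segments appear in convex position around $v$. Any Euler tour pairs the four half-edges at $v$ into two traversal pairs; weak simplicity rules out the pairing that interleaves in the cyclic order, by exactly the local argument already used in the lower-bound direction of Theorem~\ref{thm:GeomGraph}, since an interleaved pairing forces a self-crossing in every sufficiently small perturbation near $v$. The two remaining non-crossing pairings encode one bit. A \emph{variable gadget} is a small closed loop of choice vertices whose only two globally consistent non-crossing pairings correspond to \texttt{true} and \texttt{false}, exposing terminals where positive and negative wires attach. A \emph{wire gadget} is a chain of choice vertices aligned so that weak simplicity forces the same bit along the whole chain. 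A \emph{clause gadget} is a hub on which three wires terminate, designed so that the hub admits a non-crossing pairing extending the wires into a closed tour iff at least one incoming wire carries the satisfying bit; otherwise every pairing at the hub is interleaved and weak simplicity fails.

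For correctness I would argue two directions. From a satisfying assignment I set each variable gadget to the corresponding pairing, propagate the bit along each wire, and complete the hub pairing at each clause using any satisfied wire; then I perturb each gadget independently inside its own $\varepsilon$-neighborhood and glue the perturbations along the wire interiors at degree-$2$ vertices, exactly as in the proof of Theorem~\ref{thm:GeomGraph}. Conversely, any weakly simple Euler tour induces a non-crossing pairing at every choice vertex; the variable gadget then yields a Boolean value, the wire synchronizes it along its length, and each clause hub certifies that at least one incoming wire is satisfying. Coordinates stay polynomially bounded, giving strong NP-hardness.

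The main obstacle I anticipate is in the interface between the local and global conditions. Individual non-crossing pairings at each choice vertex do not automatically combine into a single weakly simple closed curve: the Euler tour could split into several disjoint circuits, or perturbations of neighboring gadgets could collide. Controlling this requires spacing the gadgets so that wires are the only interface between them, designing the variable gadget so its two globally consistent pairings are in fact the only ways to close its internal cycle as a single circuit, and designing the clause hub so that any pairing which would split the tour into two circuits is already an interleaved pairing and hence forbidden by weak simplicity. Once these geometric details are fixed, the equivalence with Planar-Monotone-3SAT is routine.
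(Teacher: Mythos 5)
You have the right reduction source and the right two-direction outline, but the gadget design you sketch cannot work, and the obstruction is already a theorem in this paper. Your choice vertices, wires, and clause hubs are all described as pieces of a \emph{simple geometric graph}: degree-$4$ vertices whose four incident segments leave in distinct directions ("convex position"), with no overlapping edges and no vertex in the interior of another edge. But Theorem~\ref{thm:GeomGraph} (undirected case) and Corollary~\ref{cor:GeomGraph} state that \emph{every} connected Eulerian geometric (multi)graph admits a weakly simple Euler tour, computable in linear time by the face-2-coloring version of Hierholzer's algorithm. So no clause hub built from such pieces can have the property you need ("admits a non-crossing pairing extending the wires into a closed tour iff at least one incoming wire carries the satisfying bit"): if your instance is Eulerian and connected and remains a geometric graph, it is \emph{always} a yes-instance, regardless of the formula. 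Your local observation about forbidden interleaved pairings at a degree-$4$ vertex is correct, but it only constrains which Euler tours are weakly simple; it never eliminates all of them, because the 2-coloring of faces always supplies a globally consistent non-crossing pairing. The same theorem also defeats the fallback mechanism you propose (forcing the tour to split into several circuits), since Algorithm~A always merges the face circuits of a connected instance into a single tour.

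The hardness therefore has to come from the degeneracies that Theorem~\ref{thm:GeomGraph} excludes, and this is the missing idea. The paper's variable gadget for variable $i$ consists of two \emph{overlapping collinear} paths between $v_{i-1}$ and $v_i$ on the $x$-axis: a single long red edge $v_{i-1}v_i$ and a black path subdivided at the literal vertices, which lie in the \emph{interior} of the red edge. The Boolean value is which side of the red edge the black path is perturbed to; because each literal vertex has degree $4$ (two collinear black edges plus two clause edges) and its pairing must be non-crossing, the entire black path is forced to one side. A clause gadget is a $9$-edge cycle through its three literal vertices lying strictly above (positive) or below (negative) the axis, and it can be merged into a single weakly simple tour only if at least one of its black paths was perturbed toward it. If you want to keep your wire/hub architecture, each wire and hub must likewise be built from overlapping collinear segments (or vertices interior to edges); otherwise the reduction collapses to a polynomially solvable special case.
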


\begin{proof}
	Given an instance of \textsc{Planar-Monotone-3SAT}, we build an instance of undirected WSPR as shown in Figure~\ref{fig:undirected-gadgets}(c).
	We split the construction into two basic gadgets.
	A variable and a clause gadget are shown in Figure~\ref{fig:undirected-gadgets}(a) and (b), respectively.
	The figure shows collinear edges distorted and colored for clarity.
	All vertices shown as small black disks are on the $x$-axis and vertices circled with a dotted ellipse represent the same graph vertex.
	\begin{figure}[t]
		\centering
			\def\svgwidth{0.95\linewidth}
\begingroup%
  \makeatletter%
  \providecommand\color[2][]{%
    \errmessage{(Inkscape) Color is used for the text in Inkscape, but the package 'color.sty' is not loaded}%
    \renewcommand\color[2][]{}%
  }%
  \providecommand\transparent[1]{%
    \errmessage{(Inkscape) Transparency is used (non-zero) for the text in Inkscape, but the package 'transparent.sty' is not loaded}%
    \renewcommand\transparent[1]{}%
  }%
  \providecommand\rotatebox[2]{#2}%
  \ifx\svgwidth\undefined%
    \setlength{\unitlength}{518.66511307bp}%
    \ifx\svgscale\undefined%
      \relax%
    \else%
      \setlength{\unitlength}{\unitlength * \real{\svgscale}}%
    \fi%
  \else%
    \setlength{\unitlength}{\svgwidth}%
  \fi%
  \global\let\svgwidth\undefined%
  \global\let\svgscale\undefined%
  \makeatother%
  \begin{picture}(1,0.26578548)%
    \put(0,0){\includegraphics[width=\unitlength,page=1]{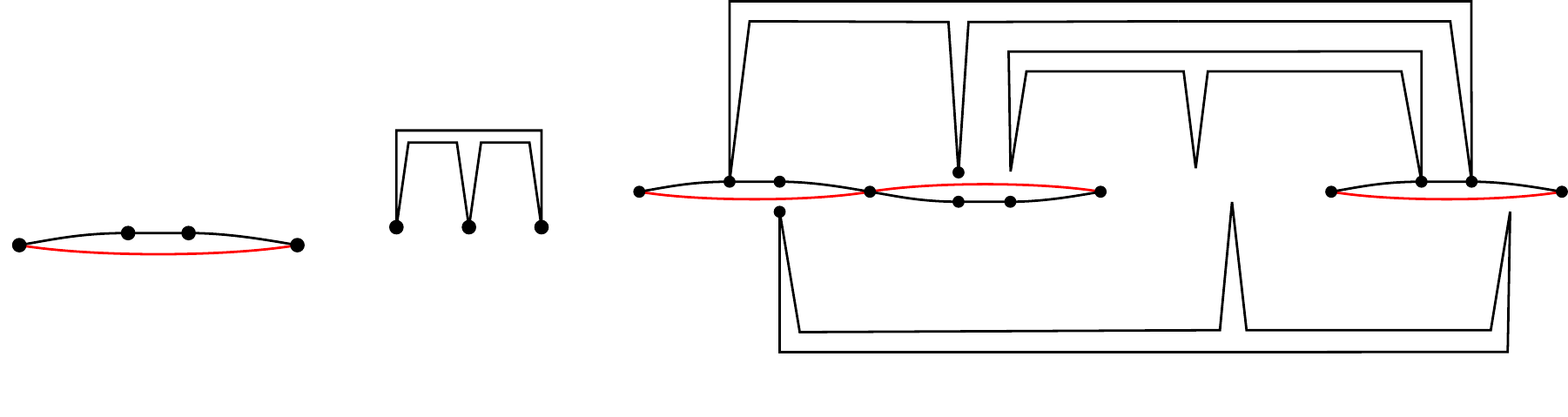}}%
    \put(0.08523663,0.06952186){\color[rgb]{0,0,0}\makebox(0,0)[lb]{\smash{(a)}}}%
    \put(0.28208981,0.04947039){\color[rgb]{0,0,0}\makebox(0,0)[lb]{\smash{(b)}}}%
    \put(0.69325454,0.00511082){\color[rgb]{0,0,0}\makebox(0,0)[lb]{\smash{(c)}}}%
    \put(0,0){\includegraphics[width=\unitlength,page=2]{undirected-gadgets-svg.pdf}}%
    \put(0.000044,0.12259203){\color[rgb]{0,0,0}\makebox(0,0)[lb]{\smash{$v_{i-1}$}}}%
    \put(0.07253779,0.13338896){\color[rgb]{0,0,0}\makebox(0,0)[lb]{\smash{$l_{i,a}$}}}%
    \put(0.11803739,0.13338896){\color[rgb]{0,0,0}\makebox(0,0)[lb]{\smash{$l_{i,b}$}}}%
    \put(0.18667695,0.12259202){\color[rgb]{0,0,0}\makebox(0,0)[lb]{\smash{$v_{i}$}}}%
    \put(0.24326271,0.0917436){\color[rgb]{0,0,0}\makebox(0,0)[lb]{\smash{$l_{i,a}$}}}%
    \put(0.28953535,0.0917436){\color[rgb]{0,0,0}\makebox(0,0)[lb]{\smash{$l_{j,a}$}}}%
    \put(0.3373504,0.0917436){\color[rgb]{0,0,0}\makebox(0,0)[lb]{\smash{$l_{k,a}$}}}%
  \end{picture}%
\endgroup%
		\caption{(a) Variable gadget, (b) clause gadget and (c) the reduction from \textsc{Planar-Monotone-3SAT} to undirected WSPR.}
		\label{fig:undirected-gadgets}
	\end{figure}
	First, place vertices $v_0,\ldots,v_n$ equally spaced on the variable line from left to right.
	The variable gadget corresponding to the $i$th variable consists of two collinear paths between $v_{i-1}$ and $v_i$, which are called \emph{red} and \emph{black} paths;
    see Figure~\ref{fig:undirected-gadgets}.
	The red path is a single edge $v_{i-1}v_i$; and the black path is made of $p+1$ edges where $p$ is the degree of the $i$-th variable in the bipartite graph $G_B$.
	We assign a vertex in the interior of this path to each edge connected to the variable, naming the vertex $l_{i,a}$ for the edge connecting the $i$-th variable to the $a$-th clause.
	We call such vertices \emph{literal} vertices.
	The clause gadget is composed of 9 edges arranged  in a cycle as shown in Figure~\ref{fig:undirected-gadgets}(b).
	The three labeled vertices correspond to the literal vertices in the clause gadgets.
	The planar embedding of the \textsc{Planar-Monotone-3SAT} instance grantees that we can embed the graph of the directed WSPR instance.
	
	Assume that the \textsc{Planar-Monotone-3SAT} instance have a satisfying assignment.
	We build a weakly simple Euler tour $P$ as follows.
	Each individual gadget defines a cycle.
	As in Algorithm A, we will merge the cycles into the polygon $P$.
	Every cycle will be traversed clockwise, however, cycles defined by variable gadgets are collinear and there is no clear definition of winding direction for them.
    We perturb the red edges based on the truth values of the variables. For each variable assigned \texttt{true} (resp., \texttt{false}),
	we perturb the red edge to pass below (resp., above) the $x$-axis.
	All variable cycles can be safely merged into a single circuit.
	We merge each clause to the variable cycle through a literal vertex of a \texttt{true} variable if the clause is positive or through a literal vertex of a \texttt{false} variable otherwise.
	
	To show that $P$ is weakly simple, we build a simple polygon $P'$ within $\eps$ Fr\'echet distance from $P$ as follows (see Figure~\ref{fig:undirected-hardness}).
	For each $v_i$ create two vertices $v_i^+$ and $v_i^-$ located $\eps/2$ above and below $v_i$ respectively.
	If the solution assigns the $i$-th variable \texttt{true}, move vertices $l_{i,a}$ up by $\eps/2$,
	replace vertices $v_{i-1}$ and $v_i$ by $v_{i-1}^+$ and $v_i^+$ in the black edges (of the corresponding gadget) and by $v_{i-1}^-$ and $v_i^-$ in the red edges.
	Connect vertices $v_0^+$ and $v_0^-$ with an edge. Do the same for $v_n^+$ and $v_n^-$.
	If the variable is assigned \texttt{false}, do analogous replacements symmetrically about the $x$-axis.
	For each clause gadget, choose a literal $l_{i,a}$ with a \texttt{true} value,
	split $l_{i,a}$ into two vertices, $l_{i,a}'$ and $l_{i,a}''$, with the same $y$-coordinate and $\eps$ distance apart,
    such that they each are incident to one edge of the variable gadget and one edge of the clause gadget.
	For the other two literals, split $l_{i,a}$ into two vertices, $l_{i,a}^+$ and $l_{i,a}^-$, with the same $x$-coordinate and $\frac{\eps}{2}$ distance apart,
    such that the one closer to the $x$-axis is incident to two edges of the variable gadget, and the other to two edges of the clause gadget.
	The result is a simple polygon and therefore undirected WSPR have a positive solution.
	
	\begin{figure}[t]
		\centering
		\includegraphics[width=0.45\linewidth]{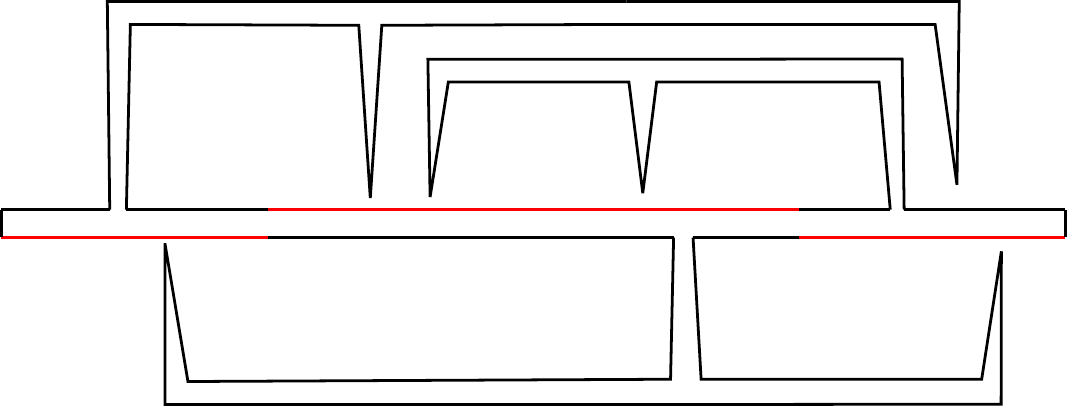}
		\caption{Simple polygon that certifies that an Euler tour of $P$ is weakly simple.}
		\label{fig:undirected-hardness}
	\end{figure}

	Now assume that the graph produced by the reduction admits a weakly simple Euler tour $P$. Then, there exist a simple polygon $P'$ within an arbitrarily small Fr\'echet distance from $P$.
    Such a polygon determines a vertical order between the paths of each variable gadget. Since every literal vertex has degree~4, there are only two possible ways to match its incident edges
    in a noncrossing manner: matching two horizontal edges and two clause edges, or a horizontal with a clause edge. In both cases, the two horizontal black edges incident to a literal vertex are placed above or below the red path. Therefore, all edges of the black path of a variable gadget are on the same side of its red path.
	For each variable, assign \texttt{true} if the black path of its gadget is above the red path and \texttt{false} otherwise.
	Since each clause gadget needs to be connected to some edge in a variable gadget, if the clause is positive/negative, one of its corresponding variables were assigned \texttt{true}/\texttt{false}.
	Hence, the assignment satisfies all clauses and the \textsc{Planar-Monotone-3SAT} instance have a positive solution.
\end{proof}

\begin{lemma}
	\label{lem:directed}
	Directed-WSPR is NP-hard.
\end{lemma}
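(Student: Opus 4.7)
The plan is to recycle the reduction from Lemma~\ref{lem:undirected}, endowing every edge with a direction in such a way that (i) the resulting directed multigraph remains Eulerian and (ii) a weakly simple directed Euler tour exists iff the underlying \textsc{Planar-Monotone-3SAT} instance is satisfiable. First I would orient the gadgets. Inside the $i$-th variable gadget I direct every edge of the black $(p+1)$-path from $v_{i-1}$ to $v_i$ and the red edge from $v_i$ to $v_{i-1}$; inside each clause gadget I orient the $9$-cycle clockwise if the clause is positive and counterclockwise if it is negative. A routine degree count then shows that every shared endpoint $v_i$, every literal vertex $l_{i,a}$, and every interior vertex of a clause cycle has in-degree equal to out-degree, and strong connectivity is immediate because on the $x$-axis one can travel right along black paths and left along red edges, while clause cycles are attached through literal vertices on this chain. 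Thus the construction yields an Eulerian directed graph, and at each literal vertex the cyclic pattern of edges is (out, in, out, in).

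For the forward direction, given a satisfying assignment, I mimic the perturbation used in Lemma~\ref{lem:undirected}: lift the black path of a \texttt{true} variable slightly above the $x$-axis and push that of a \texttt{false} variable below it. With these perturbations, each positive clause lies on the same side of the $x$-axis as at least one of its satisfying literals, and the local orientation of the clause cycle agrees with the prescribed orientation of its directed edges; symmetrically for negative clauses. Merging each clause cycle into the variable chain through a satisfying literal, exactly as in Algorithm~A, then produces a weakly simple directed Euler tour.

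For the reverse direction, suppose that a weakly simple directed Euler tour $P$ exists, witnessed by a Fr\'echet-close simple polygon $P'$. As in Lemma~\ref{lem:undirected}, $P'$ determines for each variable gadget whether its black path lies entirely above or entirely below its red edge, and I take this to be the truth assignment. The extra ingredient contributed by the directedness is that the perturbation of each clause cycle is a simple \emph{oriented} closed curve whose cw/ccw sense is forced by the directions of its edges. Consequently a positive clause cycle, being stitched above the $x$-axis and traversed clockwise, can be merged into the variable chain at a literal vertex $l_{i,a}$ only if the black path of variable $i$ passes above the red edge there---that is, only if variable $i$ is \texttt{true}; the symmetric statement holds for negative clauses. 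Hence every clause is satisfied. The main obstacle is precisely this last implication: the claim that a directed clause cycle can merge with the variable chain only at a satisfying literal requires a careful local analysis around each literal vertex, combining the alternation condition from the proof of Theorem~\ref{thm:GeomGraph} with the fact that two overlapping collinear paths can be stacked on only one side of the $x$-axis in $P'$. Everything else is a direct directed analogue of Lemma~\ref{lem:undirected}.
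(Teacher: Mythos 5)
Your orientation of the variable gadgets creates a genuine gap in the forward direction (satisfiable $\Rightarrow$ weakly simple directed Euler tour), at the shared endpoints $v_i$ of consecutive variable gadgets. With the black path of gadget $i$ directed $v_{i-1}\to v_i$ and its red edge directed $v_i\to v_{i-1}$, the four edges at an interior $v_i$ are: black-$i$ and red-$(i{+}1)$ incoming, red-$i$ and black-$(i{+}1)$ outgoing. In any perturbation into a simple polygon, $v_i$ splits into two passes, each pairing one incoming with one outgoing edge. If variables $i$ and $i{+}1$ receive \emph{different} truth values, the black path lies above the red edge on one side of $v_i$ and below it on the other, so the pairing $\{$black-$i$, black-$(i{+}1)\}$, $\{$red-$i$, red-$(i{+}1)\}$ crosses at $v_i$; the pairing that the undirected proof uses in this situation, $\{$black-$i$, red-$(i{+}1)\}$ and $\{$red-$i$, black-$(i{+}1)\}$, pairs two incoming (resp.\ two outgoing) edges and is therefore unavailable in the directed setting; and the only remaining direction-valid pairing closes each gadget into its own cycle at $v_i$, disconnecting the tour unless the clause cycles happen to restore connectivity (which they need not). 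Since there are satisfiable instances whose every satisfying assignment forces adjacent variables to opposite values, your reduction can map satisfiable formulas to ``no'' instances of Directed-WSPR. Note also that you place the burden of the argument on the reverse direction, but that direction follows immediately from the undirected argument of Lemma~\ref{lem:undirected} (a directed weakly simple Euler tour is in particular an undirected one); the real difficulty is the one above.

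The paper sidesteps this exact conflict differently: it directs \emph{all} segments of the variable gadgets (red and black alike) to the right, so that at each $v_i$ every edge extending left is incoming and every edge extending right is outgoing, making every non-crossing pairing automatically direction-consistent. This destroys the balanced in/out degrees at $v_0$ and $v_n$, which is repaired by adding four extra (blue) segments --- an additional collinear edge between the black and red paths and a return path from $v_n$ back to $v_0$ routed above the whole construction. If you want to keep your ``no new edges'' orientation, you would have to prove that the pairing conflict at the $v_i$'s can always be avoided for some satisfying assignment, which is false in general; otherwise you should adopt an orientation scheme, like the paper's, in which all collinear edges at each $v_i$ agree in direction.
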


As a consequence of Lemmas~\ref{lem:np}, \ref{lem:undirected}, and \ref{lem:directed}, we have the following result.

\begin{theorem}
	Both WSPR and Directed-WSPR are NP-complete.
\end{theorem}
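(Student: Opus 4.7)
The plan is essentially to assemble the three preceding lemmas into a single statement. First I would invoke Lemma~\ref{lem:np} to place both WSPR and Directed-WSPR in NP: given a candidate polygon $P=(p_0,\ldots,p_{n-1})$ as a certificate, one verifies in $O(|E|)$ time that $P$ is an Euler tour of the input multigraph $G=(V,E)$ (respecting the prescribed edge orientations in the directed case) and in $O(n\log n)$ time via the algorithm of~\cite{AAET16} that $P$ is weakly simple. Both checks run in polynomial time, so each variant admits a polynomial-time verifier.

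For the hardness direction, I would quote Lemma~\ref{lem:undirected} for undirected WSPR and Lemma~\ref{lem:directed} for Directed-WSPR. Each lemma furnishes a polynomial-time reduction from \textsc{Planar-Monotone-3SAT}, which is NP-hard~\cite{BK12}, so both variants are NP-hard. Combining NP-membership with NP-hardness yields NP-completeness for WSPR and for Directed-WSPR, which is precisely the statement of the theorem.

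There is essentially no obstacle here: all the genuine work sits inside the cited lemmas, and the theorem itself is a direct corollary. The only small bookkeeping point worth mentioning is that the reductions of Lemmas~\ref{lem:undirected} and \ref{lem:directed} place all vertex coordinates on a polynomially bounded integer grid (the gadgets of Figure~\ref{fig:undirected-gadgets} are built from collinear pieces on the $x$-axis together with clause cycles of bounded combinatorial size), and the source problem \textsc{Planar-Monotone-3SAT} is \emph{strongly} NP-hard; hence the conclusion in fact upgrades to strong NP-completeness, consistent with the claim made in the introduction.
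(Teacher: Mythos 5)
Your proposal matches the paper exactly: the theorem is stated as an immediate consequence of Lemma~\ref{lem:np} (NP membership for both variants) together with Lemmas~\ref{lem:undirected} and~\ref{lem:directed} (NP-hardness via reductions from \textsc{Planar-Monotone-3SAT}). Your additional remark about polynomially bounded coordinates and the resulting strong NP-completeness is a correct and harmless refinement consistent with the paper's introduction.
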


\begin{remark}
Our reduction can be modified by perturbing the points in our variable gadgets so that: (i) points belonging to the same gadget are collinear; (ii) no three points, each belonging to a different gadget are collinear; and (iii) no edge crossing is introduced.
By reducing from \textsc{Planar-Monotone-(2,3)-SAT-3}~\cite{DDD16}, in which clauses may have two or three literals and each variable can appear only in up to three clauses, we can show that WSPR remain NP-hard even if the number of mutually collinear points is constant.
\end{remark}

\section{Related problems}
\label{sec:bounds}

Since WSPR is NP-complete in the general case, we study related problems in which a weakly simple polygon is always achievable by
allowing edge subdivision and insertion of new edges.

\subsection{Edge subdivision}
\label{sec:subdivision}


Given a noncrossing graph $G=(V,E)$ where every vertex has even degree and the point set $\bigcup E$ is connected,
we define the problem WSPR$^*$ as finding a sequence of edge subdivision operations that produces a graph $G^*=(V,E^*)$
that admits a weakly simple Euler tour. The \emph{subdivision} of an edge $uv$ at a vertex $w\in {\rm relint}(uv)$
replaces $uv$ by two edges $uw$ and $wv$.

It is easy to see that WSPR$^*$ is always feasible with $O(n^2)$ subdivisions where $n=|V|$. Indeed, subdivide every edge $uv$ recursively at each vertex that lies in the interior of $uv$. We obtain a connected geometric multigraph with even degrees, which admits a weakly simple Euler tour by Corollary~\ref{cor:GeomGraph}.
%
%
The main result of this section is the following.
%

\begin{theorem}\label{thm:subdivision}
Every noncrossing graph $G=(V,E)$ such that every $v\in V$ has even degree and $\bigcup E$ is connected,
can be transformed into a graph $G^*=(V,E^*)$ using $O(|E|)$ edge subdivisions, and this bound cannot be improved.
\end{theorem}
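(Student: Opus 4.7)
The plan is to prove the upper and lower bounds separately, leaning on the two previously-handled regimes: Theorem~\ref{thm:collinear} (collinear overlaps need no subdivision) and Corollary~\ref{cor:GeomGraph} (any geometric multigraph admits a weakly simple Euler tour).

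For the upper bound, the first step is to examine the planar arrangement $\mathcal{A}$ of $\bigcup E$. Because no two edges cross, the vertices of $\mathcal{A}$ are exactly the points of $V$, and the edges of $\mathcal{A}$ are \emph{atoms}---the maximal open sub-segments of $E$-edges whose interior contains no vertex of $V$. A key combinatorial fact is that $\mathcal{A}$ has $O(|E|)$ atoms: on each supporting line $L$, the set $E_L\subseteq E$ of edges lying on $L$ produces only $O(|E_L|)$ atoms in the one-dimensional arrangement on $L$, and $\sum_L |E_L|=|E|$. Note that the naive strategy of subdividing every edge at every interior vertex of $V$ can cost $\Omega(|E|^2)$ already when many collinear edges nest inside one another, so we cannot afford to reach a full geometric multigraph everywhere.

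The second step is the subdivision strategy. For each vertex $v\in V$, and for each of the at most $\deg(v)$ distinct directions in which edges of $G$ emanate from $v$, subdivide only the first atom along that direction at a point close to $v$. This uses $O(\deg(v))$ subdivisions per vertex, hence $O(|E|)$ overall by the handshake lemma. After this step, each maximal collinear bundle of edges of $E$ is attached to the rest of $G^*$ only through these stub subdivisions at its endpoint vertices, so the bundle can be traversed as a collinear Eulerian multigraph whose Euler tour is weakly simple by Theorem~\ref{thm:collinear}. The branching at each vertex between distinct directions is then resolved by a Hierholzer/Algorithm~A style merging along the faces of $\mathcal{A}$, alternately two-coloured so that consecutive stubs in the tour lie on the boundary of a common white face.

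For the lower bound, I would construct a family of Eulerian noncrossing graphs $G_k$ with $|E(G_k)|=\Theta(k)$ requiring $\Omega(k)$ subdivisions. The building block is a small \emph{obstruction gadget}---a variant of the graph in Figure~\ref{fig:intro}(c) that admits no weakly simple Euler tour without at least one internal subdivision. Concatenating $k$ such gadgets via pairs of parallel ``bridge'' edges preserves the Eulerian and noncrossing properties; a local argument (inspecting the possible pairings of edges at the cut vertex separating one gadget from the rest) shows that a subdivision in one gadget cannot relieve the obstruction in another, forcing $\Omega(k)$ subdivisions in total.

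The main obstacle I anticipate is the upper bound: proving that these $O(\deg(v))$-bounded \emph{local} subdivisions assemble into a single \emph{global} weakly simple Euler tour. The subtlety is that collinear bundles must not be over-subdivided---each bundle internally contributes zero to the subdivision count---while the branching structure at the junction vertices must still admit a consistent non-crossing pairing of the incident stubs. I expect to handle this through an inductive merging argument along the 2-coloured faces of $\mathcal{A}$, in the spirit of Algorithm~A, where each merge combines a bundle-circuit with the current partial tour across a common white face.
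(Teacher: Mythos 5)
Your upper bound has a genuine gap, and it is the heart of the theorem. The rule ``for each direction in which an edge \emph{emanates} from $v$, subdivide the first atom at a point close to $v$'' costs only $O(\deg v)$ per vertex precisely because it never touches an edge whose relative interior contains $v$ but which is not incident to $v$ --- yet those are exactly the subdivisions that are unavoidable. Two concrete failures: (a) if a vertex $v$ carries, say, a small $2$-cycle and the rest of $G$ reaches $v$ only via a long edge passing through its interior, then $\bigcup E$ is connected but $G$ is disconnected as an abstract graph; your rule subdivides only the cycle edges near $v$, so $G^*$ still has no Euler tour of any kind. (b) Even when $G$ is connected and Eulerian, the paper's own lower-bound instance (Figure~\ref{fig:lowerbounds}(a): a long edge $e_R$ overlapped by a collinear path, with two small triangles attached at each interior vertex $v_i$) requires $e_R$ to be subdivided \emph{at} each $v_i$: otherwise $v_i$ has degree $6$, splits into three degree-$2$ copies in any perturbation, each copy lies above or below $e_R$, and the side with only one copy strands a triangle. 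Subdividing the neighbouring atoms ``close to $v_i$'' does not remove this obstruction, and $e_R$ does not emanate from $v_i$, so your rule never subdivides it there. Relatedly, the claim that a bundle ``is attached to the rest of $G^*$ only through stub subdivisions at its endpoint vertices'' is false (attachments occur at interior vertices of the bar), and Theorem~\ref{thm:collinear} only says that an Euler tour of a \emph{purely} collinear graph is weakly simple; it says nothing about whether that tour's perturbation can place each interior attachment on the correct side of the bar, which is the actual difficulty. (A smaller issue: the $O(|E|)$ bound on the number of atoms is not justified --- interior point-on-segment incidences per supporting line are not $O(|E_L|)$ and can be superlinear overall --- though your scheme does not actually need it.)

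What is missing is a mechanism that, within each bar $b$, decides with $O(1)$ subdivisions per relevant point \emph{which} covering edge to cut and how the resulting pieces pair up. The paper does this by listing, along the top and bottom of $b$, the points that are $b'$-odd for bars $b'$ above resp.\ below, matching consecutive points, and routing edge-disjoint ``subdivided paths'' ($\mathcal{M}^+,\mathcal{M}^-$) between matched pairs (each point costs $O(1)$ subdivisions because the path greedily picks the covering edge with leftmost right endpoint); the leftover collinear edges form even-degree circuits ($\mathcal{O}^\pm$, $\mathcal{B}'$) that are hooked on with $O(1)$ further subdivisions, and the whole bar is then perturbed into a planar geometric graph to which Theorem~\ref{thm:GeomGraph} applies. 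Your proposal contains no analogue of this parity-matching step. Your lower-bound sketch (a chain of local obstruction gadgets, each forcing one subdivision) differs from the paper's single-long-edge construction but is plausible; the ``local argument'' you defer is essentially the three-copies counting argument above, so that half is salvageable.
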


Before the proof, we introduce some notation (from~\cite{AAET16,CEX15,CDP+09}). Let $G=(V,E)$ be a noncrossing graph.
The transitive closure of the \emph{overlap} relation is an equivalence relation on $E$. The union of all edges in an equivalence class is called a \emph{bar}, it is a line segment. A vertex $v\in b$ is called \emph{$b$-odd} if $v$ is incident to an odd number of edges contained in $b$, or \emph{$b$-even} otherwise.
A vertex can be $b$-odd and $b'$-even for different bars $b$ and $b'$ (see Figure~\ref{fig:subdivide}(b)). 

Our algorithm will compute simple paths formed by subdivided edges.
Let $b$ be a horizontal bar with vertices $p_1,p_2\in b$, $x(p_1)\le x(p_2)$. Let $q_1q_2\in E$ be an edge that contains $p_1$ and its right endpoint has minimum $x$-coordinate. A \emph{subdivided paths}, denoted by $\widehat{p_1p_2}$, is a path between $p_1$ and $p_2$, defined recursively (see Fig.~\ref{fig:subdivide}(d)): (i) if $x(p_1)=x(p_2)$, $\widehat{p_1p_2}=\emptyset$;  (ii) if $p_2\in q_1q_2$, then subdivide $q_1q_2$ into three edges $e_1=q_1p_1$, $e_2=p_1p_2$, and $e_3=p_2q_2$ and put $\widehat{p_1p_2}=(e_2)$; (iii)  if $p_2\not\in q_1q_2$, then subdivide $q_1q_2$ into two edges $e_1=q_1p_1$ and $e_2=p_1q_2$, and put $\widehat{p_1p_2}=(e_2)\oplus\widehat{q_2p_2}$, where $\oplus$ denotes concatenation. Consequently, if the segment $p_1p_2$ contains $k$ vertices, a path $\widehat{p_1p_2}$ can be constructed using at most $k$ edge subdivisions.
An example is shown in Figure~\ref{fig:subdivide}(c).

\begin{proof}[Proof of Theorem~\ref{thm:subdivision}]
The proof of the upper bound is constructive. The algorithm subdivides edges within each bar independently. Let $b$ be a bar containing $m$ vertices. We apply $O(m)$ edge subdivisions and partition the edges in $b$ into subsets: Subsets $\mathcal{M}^+$ and $\mathcal{M}^-$ will consists of subdivision paths between the intersection points of $b$ with other bars lying above and below $b$, respectively; all remaining edges will be partitioned into tours (each of which is a weakly simple polygon by Theorem~\ref{thm:collinear}).
The algorithm is divided into three phases: Phase~1 creates $\mathcal{M}^+$ and $\mathcal{M}^-$; phase~2 forms circuits; and phase~3 establishes common vertices between the subdivision paths and circuits. Refer to Figure~\ref{fig:subdivide}.

\noindent{\bf Phase~1.} Compute a list $B^+$ (resp., $B^-$) of points $p$ in the interior of $b$ such that $p$ is $b'$-odd for some bar $b'$ that is above or collinear to $b$ (resp., below $b$). A point can appear more than once in each list if it is odd in multiple bars $b'$.
Sort the lists by $x(p)$, ties are broken by clockwise (resp., counterclockwise) order of the corresponding bars $b'$.
If the left (resp., right) endpoint of $b$ is $b$-odd, add it to the beginning (resp., end) of the list $B^+$.
If any of the lists have odd cardinality, append the right endpoint of $b$ at the end of the list.
Create a perfect matching of consecutive endpoints in each list.
Construct edge disjoint subdivided paths between each pair of matched points, and let $\mathcal{M}^+$ and $\mathcal{M}^-$
denote the set of edges in such paths for $B^+$ and $B^-$, respectively (see Figure~\ref{fig:subdivide}(d)).

\noindent{\bf Phase~2.}
Let $\mathcal{B}$ be the set of (subdivided) edges that lie on $b$ and are not in $\mathcal{M}^+\cup \mathcal{M}^-$.
The union of edges in $\mathcal{B}$ may be a disconnected point set (e.g., as shown in Figure~\ref{fig:subdivide}(d)).
Let the line segment $r_1r_2$ be one of the connected components of $\bigcup\mathcal{B}$.
Construct two edge disjoint subdivided paths $\widehat{r_1r_2}^+$ and $\widehat{r_1r_2}^-$ from the edges in $\mathcal{B}$.
For every path $\widehat{p_1p_2}$ in $\mathcal{M}^+$ (resp., $\mathcal{M}^-$) that overlaps with $r_1r_2$, identify an edge of $\widehat{r_1r_2}^+$ (resp., $\widehat{r_1r_2}^-$)
that contains a vertex of $\widehat{p_1p_2}$ and subdivide it at such vertex (see Figure~\ref{fig:subdivide}(e)).
Let $\mathcal{O}^+$ (resp., $\mathcal{O}^-$) be the set of edges in $\widehat{r_1r_2}^+$ (resp., $\widehat{r_1r_2}^-$) for all components $r_1r_2$ of the union of edges in $\mathcal{B}$.

\noindent{\bf Phase~3.}
Let $\mathcal{B}'$ be the set of edges in $\mathcal{B}\setminus(\mathcal{O}^+\cup\mathcal{O}^-)$.
For every component $C$ of the subgraph induced by $\mathcal{B}'$, let $p_0$ be the leftmost vertex of $C$,
identify the edge in $\mathcal{O}^+$ that contains $p_0$ and subdivide it at $p_0$.
This concludes the construction of $G^*$.

\begin{figure}[h]
	\centering
	\def\svgwidth{.9\columnwidth}
\begingroup%
  \makeatletter%
  \providecommand\color[2][]{%
    \errmessage{(Inkscape) Color is used for the text in Inkscape, but the package 'color.sty' is not loaded}%
    \renewcommand\color[2][]{}%
  }%
  \providecommand\transparent[1]{%
    \errmessage{(Inkscape) Transparency is used (non-zero) for the text in Inkscape, but the package 'transparent.sty' is not loaded}%
    \renewcommand\transparent[1]{}%
  }%
  \providecommand\rotatebox[2]{#2}%
  \ifx\svgwidth\undefined%
    \setlength{\unitlength}{506.4722785bp}%
    \ifx\svgscale\undefined%
      \relax%
    \else%
      \setlength{\unitlength}{\unitlength * \real{\svgscale}}%
    \fi%
  \else%
    \setlength{\unitlength}{\svgwidth}%
  \fi%
  \global\let\svgwidth\undefined%
  \global\let\svgscale\undefined%
  \makeatother%
  \begin{picture}(1,0.35616639)%
    \put(0,0){\includegraphics[width=\unitlength,page=1]{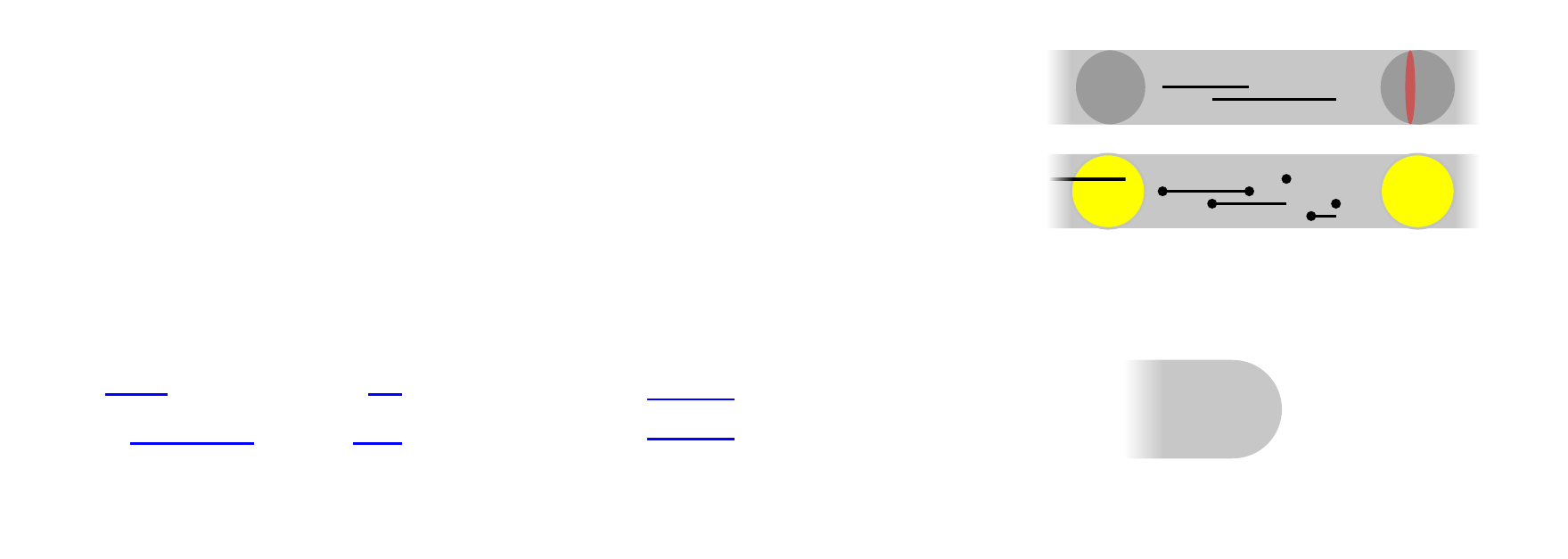}}%
    \put(0.70009773,0.33663222){\color[rgb]{0,0,0}\makebox(0,0)[lb]{\smash{$p_1$}}}%
    \put(0.88851584,0.33663222){\color[rgb]{0,0,0}\makebox(0,0)[lb]{\smash{$p_2$}}}%
    \put(0,0){\includegraphics[width=\unitlength,page=2]{subdivide-eg.pdf}}%
    \put(0.7926301,0.16649171){\color[rgb]{0,0,0}\makebox(0,0)[lb]{\smash{(c)}}}%
    \put(0,0){\includegraphics[width=\unitlength,page=3]{subdivide-eg.pdf}}%
    \put(0.16553171,0.20237017){\color[rgb]{0,0,0}\makebox(0,0)[lb]{\smash{(a)}}}%
    \put(0.49681607,0.20237017){\color[rgb]{0,0,0}\makebox(0,0)[lb]{\smash{(b)}}}%
    \put(0.13564871,0.0065056){\color[rgb]{0,0,0}\makebox(0,0)[lb]{\smash{(d)}}}%
    \put(0.69109117,0.04777741){\color[rgb]{0,0,0}\makebox(0,0)[lb]{\smash{}}}%
    \put(0,0){\includegraphics[width=\unitlength,page=4]{subdivide-eg.pdf}}%
    \put(0.24560625,0.12324652){\color[rgb]{0,0,0}\makebox(0,0)[lb]{\smash{$\mathcal{M^+}$}}}%
    \put(0.24669446,0.03520075){\color[rgb]{0,0,0}\makebox(0,0)[lb]{\smash{$\mathcal{M^-}$}}}%
    \put(0,0){\includegraphics[width=\unitlength,page=5]{subdivide-eg.pdf}}%
    \put(0.48157084,0.0065056){\color[rgb]{0,0,0}\makebox(0,0)[lb]{\smash{(e)}}}%
    \put(0,0){\includegraphics[width=\unitlength,page=6]{subdivide-eg.pdf}}%
    \put(0.60836237,0.10138734){\color[rgb]{0,0,0}\makebox(0,0)[lb]{\smash{$\mathcal{O^+}$}}}%
    \put(0.60836237,0.0614976){\color[rgb]{0,0,0}\makebox(0,0)[lb]{\smash{$\mathcal{O^-}$}}}%
    \put(0,0){\includegraphics[width=\unitlength,page=7]{subdivide-eg.pdf}}%
    \put(0.74702552,0.02999976){\color[rgb]{0,0,0}\makebox(0,0)[lb]{\smash{$D_{u_i}$}}}%
    \put(0.81058055,0.00510666){\color[rgb]{0,0,0}\makebox(0,0)[lb]{\smash{(f)}}}%
    \put(0,0){\includegraphics[width=\unitlength,page=8]{subdivide-eg.pdf}}%
  \end{picture}%
\endgroup%
	\caption{(a) A bar $b$ and its adjacent bars. (b) Each bar $b'$ is shown with its $b'$-odd and $b'$-even vertices shown in red and green respectively. (c) The subdivided path $\widehat{p_ip_2}$ is shown in blue. Examples of (d) $\mathcal{M^+}$ and $\mathcal{M^-}$; (e) $\mathcal{O^+}$ and $\mathcal{O^-}$. (f) Connecting a component of $\mathcal{B}'$ to a path in $\mathcal{O^+}$ with two polygonal paths shown in magenta.}
	\label{fig:subdivide}
\end{figure}

\noindent{\bf Correctness.}
Now we prove that $G^*$ admits a weakly simple Euler tour.
Notice that $G^*$ is connected since the subdivisions in phase~2 connects every component of $\mathcal{M^+}$ or $\mathcal{M^-}$ to every overlapping component of $\mathcal{O}^+$ or $\mathcal{O}^-$, and phase~3 connects every component of $\mathcal{B}'$ to some component in $\mathcal{O}^+$.
Since edge subdivisions do not change the parity of degrees, every vertex in $G^*$ has even degree, hence $G^*$ is Eulerian.
We construct an Eulerian geometric graph $G'$ such that every Euler tour in $G'$ is within $\eps/2$ Fr\'echet distance from an Euler tour in $G^*$. Theorem~\ref{thm:GeomGraph} will then imply that there exists a simple polygon within $\eps$ Fr\'echet distance from an Euler tour in $G^*$.

We recall some notation introduced in \cite{CEX15}. For every vertex $v\in V$, let $D_v$ be a disk centered at $v$ of radius $\frac{\eps}{4}$. For a bar $b$ between $u_0$ and $u_k$, let $D_b$ be the $\eps^2$ neighborhood of $b$ setminus $D_{u_0}\cup D_{u_k}$. Assume that $\eps\in (0,\frac{1}{4})$ is so small that the disks $D_v$ are pairwise disjoint; a disk $D_v$ intersects $D_b$ only if $v\in b$, and the neighborhoods $D_b$ are pairwise disjoint.

For each bar $b$ with vertices $u_0,\ldots , u_k$, we perturb the edges of $E^*$ contained in $b$ into noncrossing simple polygons and polygonal chains. Embed each subdivided path in $\mathcal{M}^+$ (resp., $\mathcal{M}^-$) $\widehat{u_iu_j}$ in the upper (resp., lower) boundary of the region $D_b$ such that $u_i$ is on the boundary of $D_{u_i}$ and $u_j$ is on the boundary of $D_{u_j}$.
Subdivide $D_b$ with $\ell+1$ horizontal lines where $\ell$ is the number of components of the subgraph induced by $\mathcal{B}'$.
Embed all edges in $\mathcal{O}^+$ (resp., $\mathcal{O}^-$) in the first (resp., $\ell+1$-th) such line.

Recall that every vertex of $G$ has even degree.
If $b$ contains a $b$-odd vertex $p$, there must exist a bar $b'$ such that $p$ is $b'$-odd.
Because $\mathcal{M}^+$ and $\mathcal{M}^-$ matches such points, the subgraph induced by $\mathcal{B}$ contain only even degree vertices.
By construction the edges in $\mathcal{O}^+\cup \mathcal{O}^-$ form nonoverlapping disjoint circuits.
Hence, the subgraph induced by $\mathcal{B}'$ contains only even degree vertices. Consequently, each of its components is Eulerian and forms a weakly simple polygons that we denote by $\gamma_1(b),\ldots ,\gamma_{\ell}(b)$, sorted by the $x$-coordinates of their left endpoints.
Perturb $\gamma_1(b),\ldots ,\gamma_{\ell}(b)$ into simple polygons that lie in the interior of $D_b$, separated by one of the $\ell+1$ lines, in this linear order (ties are broken arbitrarily).
For $i=0,\ldots , k$, consider all polygons $\gamma_j(b)$ whose leftmost vertex is $u_i$. Connect the left endpoints of each such $\gamma_j(b)$
to the copy of $u_i$ in $\mathcal{O^+}$ by two polygonal paths within $D_{u_i}$ (these paths connect different copies of vertex $u_i\in V^*$, see Figure~\ref{fig:subdivide}(f)).
Similarly, for each subdivision performed in phase 2 at $u_i$ of an edge in $\mathcal{O^+}$ (resp., $\mathcal{O^-}$) in a path $\widehat{r_1r_2}$, connect the copy of $u_i$ in this path to a copy in $\mathcal{M}^+$ (resp., $\mathcal{O^-}$) by two polygonal paths within the disk $D_{u_i}$.
Connect the endpoints of the overlapping paths in $\mathcal{O^+}$ and $\mathcal{O^-}$ (forming a cycle of each), and if the right endpoint of $b$ is not $b$-odd and was added to $B^+, B^-$, connect the copies of $u_k$ in $\mathcal{M}^+$ and $\mathcal{M}^-$.
For each matching in $B^+$ and $B^-$ involving a point $u_i$ that is a $b'$-odd endpoint of a bar $b'$, connect the paths in $\mathcal{M}^+$ or  $\mathcal{M}^-$ that correspond to a match in $b$ to the path in $\mathcal{M}^+$ of $b'$ that contains $p$.
Finally, for each point $u_i$ that is the endpoint of a bar $b'$ and is $b'$-even, connect the corresponding copies of $u_i$, making the graph induced by all edges containing a point on $b$ connected. This concludes the construction of $G'$.

Theorem~\ref{thm:GeomGraph} completes the proof: An Euler tour $\widehat{P}$ of $G'$ can be perturbed into a simple polygon $P$ such that $\dist_F(P,\widehat{P})< \frac{\eps}{2}$.
The tour $\widehat{P}$ maps to an Euler tour $P^*$ of $G^*$ by identifying the vertices that lie in the same disk $D_v$, $v\in V^*$; and $\dist_F(\widehat{P},P^*)< \frac{\eps}{2}$.

Our lower bound construction is shown in Figure~\ref{fig:lowerbounds}(a).
It consists of a graph $G=(V,E)$ containing a long edge $e_R$ (shown in red) and a path of $(|E|+5)/7$ non-overlapping collinear edges that connects the endpoints of $e_R$.
Each vertex in the interior of $e_R$ is also incident to two small cycles above and below $e_R$ respectively.
Although the graph is Eulerian, it does not admit a weakly simple Euler tour.
Each vertex $p$ in the interior of the red edge $e_R$ is incident to two small triangles.
Suppose that $e_R$ is \emph{not} subdivided at $p$. Then $p$ has degree 6.
In any perturbation of a weakly simple Euler tour, vertex $p$ is split into 3 copies,
each of degree 2, and each lying above or below $e_R$. Suppose only one copy of $p$
lies below $e_R$. Then it is incident to two edges of a small triangle below $e_R$,
which is then disconnected from the rest of the graph, a contradiction.
Consequently, $e_R$ must be subdivided at all interior vertices.
Figure~\ref{fig:lowerbounds}(b) shows that $O(|E|)$ subdivisions of $e_R$ suffice
in this case.
\end{proof}

\begin{figure}[b]
	\centering
	\def\svgwidth{0.9\columnwidth}
\begingroup%
  \makeatletter%
  \providecommand\color[2][]{%
    \errmessage{(Inkscape) Color is used for the text in Inkscape, but the package 'color.sty' is not loaded}%
    \renewcommand\color[2][]{}%
  }%
  \providecommand\transparent[1]{%
    \errmessage{(Inkscape) Transparency is used (non-zero) for the text in Inkscape, but the package 'transparent.sty' is not loaded}%
    \renewcommand\transparent[1]{}%
  }%
  \providecommand\rotatebox[2]{#2}%
  \ifx\svgwidth\undefined%
    \setlength{\unitlength}{371.5756538bp}%
    \ifx\svgscale\undefined%
      \relax%
    \else%
      \setlength{\unitlength}{\unitlength * \real{\svgscale}}%
    \fi%
  \else%
    \setlength{\unitlength}{\svgwidth}%
  \fi%
  \global\let\svgwidth\undefined%
  \global\let\svgscale\undefined%
  \makeatother%
  \begin{picture}(1,0.29361613)%
    \put(0,0){\includegraphics[width=\unitlength,page=1]{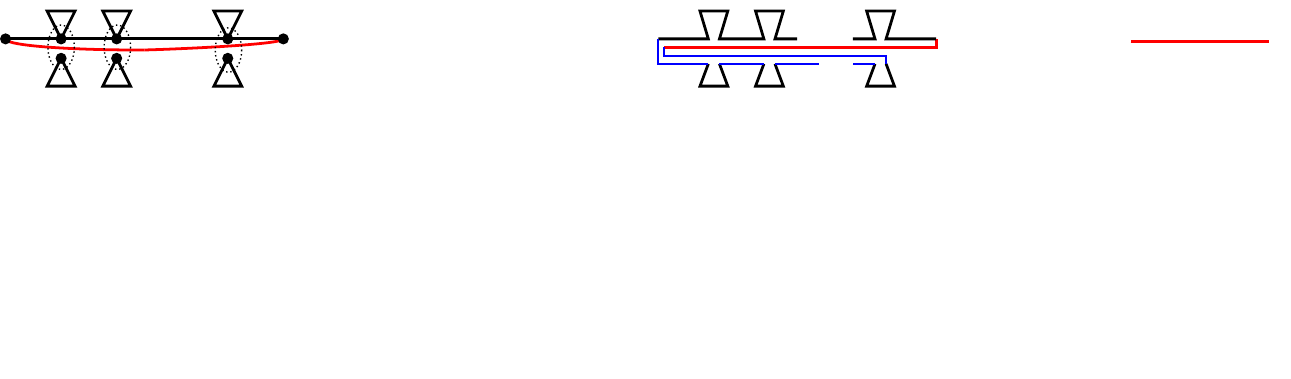}}%
    \put(0.09326117,0.19820666){\color[rgb]{0,0,0}\makebox(0,0)[lb]{\smash{(a)}}}%
    \put(0.34593345,0.19875037){\color[rgb]{0,0,0}\makebox(0,0)[lb]{\smash{(b)}}}%
    \put(0,0){\includegraphics[width=\unitlength,page=2]{lowerbounds.pdf}}%
    \put(0.60429262,0.19875073){\color[rgb]{0,0,0}\makebox(0,0)[lb]{\smash{(c)}}}%
    \put(0,0){\includegraphics[width=\unitlength,page=3]{lowerbounds.pdf}}%
    \put(0.86419176,0.19880227){\color[rgb]{0,0,0}\makebox(0,0)[lb]{\smash{(d)}}}%
    \put(0.10674468,0.06971362){\color[rgb]{0,0,0}\makebox(0,0)[lb]{\smash{(e)}}}%
    \put(0.45492038,0.00713398){\color[rgb]{0,0,0}\makebox(0,0)[lb]{\smash{(f)}}}%
    \put(0.80093721,0.00713398){\color[rgb]{0,0,0}\makebox(0,0)[lb]{\smash{(g)}}}%
    \put(0,0){\includegraphics[width=\unitlength,page=4]{lowerbounds.pdf}}%
  \end{picture}%
\endgroup%
	\caption{Lower bound constructions.}
	\label{fig:lowerbounds}
\end{figure}

\subsection{Edge insertion}
\label{sec:insertion}

We define the problem WSPR$^+$ as finding a set of edges $E^+$ such that
$G^+=(V,E\cup E^+)$ admits a weakly simple Euler tour.
Denote by $\|E\|$ and $\|E^+\|$, respectively, the sum of the lengths of all edges in $E$ and $E^+$.
If the point set $\bigcup E$ is disconnected, then there is no upper bound on $\|E^+\|$.
Otherwise, we can establish worst-case upper and lower bounds for $\|E^+\|$ in terms of $\|E\|$.
%

\begin{theorem}\label{thm:upperbound}
Let $G=(V,E)$ be a noncrossing multigraph such that $\bigcup E$ is a connected point set.
Then there exists a set of line segments $E^+$ such that
$\|E^+\|\leq 3\|E\|$ and $G^+=(V,E\cup E^+)$ admits a weakly simple Euler tour.
\end{theorem}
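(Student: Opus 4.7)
The plan is to construct $E^+$ in two parts. First, let $E^+_1$ consist of a copy of each edge of $E$; then $\|E^+_1\|=\|E\|$ and every vertex of $(V,E\cup E^+_1)$ has doubled, hence even, degree $2d_v$, where $d_v$ denotes the degree of $v$ in $G$. Second, for each $e\in E$ that contains $k_e\geq 1$ vertices of $V$ in its relative interior, decompose $e$ into the $k_e+1$ maximal sub-segments between consecutive vertices of $V$ lying on $e$, and add \emph{two} copies of each such sub-segment to $E^+_2$. Since the sub-segments of each such $e$ have total length $\|e\|$, doubling them contributes at most $2\|e\|$ per edge, so $\|E^+_2\|\leq 2\|E\|$. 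Setting $E^+=E^+_1\cup E^+_2$ yields $\|E^+\|\leq 3\|E\|$, matching the claimed bound.

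A direct degree count shows that $G^+=(V,E\cup E^+)$ is Eulerian. Fix $v\in V$: the contribution of $E\cup E^+_1$ to $\deg_{G^+}(v)$ is $2d_v$; the contribution of $E^+_2$ is $2$ for each edge $e\in E$ with $k_e\geq 1$ that has $v$ as an endpoint (from the doubled first or last sub-segment of $e$), and $4$ for each edge $e\in E$ that contains $v$ in its relative interior (from the two doubled sub-segments flanking $v$). All of these contributions are even, so $\deg_{G^+}(v)$ is even. Together with the connectedness of $\bigcup(E\cup E^+)\supseteq \bigcup E$, this makes $G^+$ Eulerian.

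The main obstacle is to show that $G^+$ admits a weakly simple Euler tour. I would adapt the perturbation framework from the correctness proof of Theorem~\ref{thm:subdivision}. For each bar $b$ of $G^+$, the restriction $G^+|_b$ is a collinear multigraph whose every vertex has even degree (by the same case analysis as above, applied bar by bar), so by Theorem~\ref{thm:collinear} any Euler tour of $G^+|_b$ is weakly simple. The bar-local tours are then merged into a single global Euler tour via an Algorithm~A-style combination: within each neighborhood $D_b$ around a bar $b$, the doubled edges and doubled sub-segments leave enough room to route the edges of $G^+|_b$ into nonoverlapping upper and lower polygonal chains, and within each disk $D_v$ around a vertex $v$, the even multiplicity of edges arriving from every bar incident to $v$ admits a noncrossing pairing of incoming and outgoing segments. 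The technical burden is the bookkeeping needed to verify that the resulting collection of curves can be stitched into a single Jordan curve within $\varepsilon$ Fréchet distance of an Euler tour of $G^+$, analogously to the construction of $G'$ in the proof of Theorem~\ref{thm:subdivision}.
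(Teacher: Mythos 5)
Your construction is plausible and does meet the length budget, and it is genuinely different from the paper's: the paper fixes degree parity inside each bar with \emph{single} segments $u_{i-1}u_i$ (added only where the coverage of $u_{i-1}u_i$ by $E$ is odd) and then adds one doubled path spanning the \emph{entire} bar, which serves as an outermost ``spine'' $\beta(b)$ to which all collinear components and all crossing bars attach; you instead double every edge and add doubled subdivision paths per edge. However, there is a genuine gap, and it is not the bookkeeping you defer at the end --- it is the theorem. Being Eulerian is \emph{not} sufficient for the existence of a weakly simple Euler tour (Figure~\ref{fig:intro}(c), and the lower-bound graph in the proof of Theorem~\ref{thm:subdivision}, are Eulerian and admit none); indeed, deciding whether an Eulerian noncrossing graph has a weakly simple Euler tour is exactly the NP-complete problem of Section~\ref{sec:hard}. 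So the paragraph ``a direct degree count shows that $G^+$ is Eulerian'' plus ``adapt the perturbation framework'' does not prove anything until the perturbation is actually exhibited. Concretely, one must show that the two transit arcs of each doubled long edge, which cut every intermediate vertex disk $D_v$ into three regions, can be reconciled with the transversal structures attached at $v$ from above and below: each such structure needs at least two of your doubled sub-segment ends routed into its region of $D_v$, these ``lanes'' must be extended consistently along the whole bar until the transit arcs terminate and the regions merge, and the resulting noncrossing geometric graph must be shown connected before Theorem~\ref{thm:GeomGraph} can be invoked. The paper's spine is designed precisely so that this routing is canonical (everything attaches to the outermost doubled lens of $D_b$, and bars attach to each other spine-to-spine inside $D_v$); with your per-edge doubling there is no single spanning spine, so the verification is a real argument, not a routine one. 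I could not produce a counterexample to your construction, but as written the proof does not establish that $G^+$ admits a weakly simple Euler tour.

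A second, smaller gap: you conclude that $G^+$ is Eulerian ``together with the connectedness of $\bigcup(E\cup E^+)$.'' Eulerian requires connectivity of the abstract multigraph, not of the underlying point set --- these differ precisely when a vertex lies in the relative interior of an edge not incident to it. Your doubled sub-segments do in fact supply the missing incidences (any two edges whose union is connected either share an endpoint or one has an endpoint in the other's relative interior, where a doubled sub-segment now ends), but this needs to be said; it is one of the two reasons the sub-segments are there at all.
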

\begin{proof}
We construct $E^+$ as follows. Partition $E$ into \emph{bars} (equivalence classes of the transitive closure of the overlap relation on $E$). Denote by $b$ the union of edges in a bar. W.l.o.g., we may assume that $b$ is horizontal. Denote by $u_0,\ldots ,u_k\in V$ the vertices of $V$ along $b$ sorted by $x$-coordinates (where $b=u_0u_k$). For $i=1,\ldots , k$, add an edge $u_{i-1}u_i$ to $E^+$ if the edges of $E$ in the bar cover the line segment $u_{i-1}u_i$ an odd number of times. The old and new edges in the bar $b$ jointly form a graph of even degree that we denote by $G(b)$. By Theorem~\ref{thm:collinear}, every component of $G(b)$ admits a weakly simple Euler tour. Finally, add two more copies of edge $u_{i-1}u_i$ to $E^+$ for all $i=1,\ldots k$. After repeating the above steps for every bar, we have $\|E^+\|\leq 3\|E\|$ and $G^+=(V,E\cup E^+)$ is Eulerian.

We omit the proof of correctness (which is provided in Appendix A.), that shows that $G^+$ admits a weakly simple Euler tour, since it is similar to the proof of Theorem~\ref{thm:subdivision}.
\end{proof}

\noindent{\bf Lower bound constructions.}
All our lower bound constructions are graphs $G=(V,E)$ in which an edge connects two points on the boundary of the convex hull of $V$,
denoted ${\rm ch}(V)$.



\begin{theorem}\label{thm:lowerbounds}
Let $\mathcal{G}$ be a family of noncrossing multigraphs. For $G=(V,E)\in \mathcal{G}$, let $E^+$ be an edge set of minimum length $\|E^+\|$ such that $G^+=(V,E\cup E^+)$ admits a weakly simple Euler tour; and let $\lambda(\mathcal{G}) = \sup_{G\in \mathcal{G}} \|E^+\|/\|E\|$. Then:
\begin{enumerate}
\item $\lambda(\mathcal{G}_1)\geq 1$, where $\mathcal{G}_1=\{$Eulerian noncrossing multigraphs$\}$.
\item $\lambda(\mathcal{G}_2)\geq \frac65$, where $\mathcal{G}_2=\{$connected noncrossing multigraphs$\}$.
\item $\lambda(\mathcal{G}_3)\geq 3$, where $\mathcal{G}_3=\{$noncrossing multigraphs $G=(V,E)$ such that $\bigcup E$ is connected$\}$.
\end{enumerate}
\end{theorem}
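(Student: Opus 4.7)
The plan is to exhibit, for each of the three families, an explicit construction realising the stated lower bound; these constructions are shown in Figure~\ref{fig:lowerbounds}(c)--(g). A common template underlies all three: a long segment $e$ of length $L$ whose endpoints lie on the boundary of $\mathrm{ch}(V)$, supplemented by small collinear structure and gadgets that enforce specific parity and routing constraints along $e$. For each construction I would bound $\|E^+\|$ from below by a case analysis on how a hypothetical weakly simple perturbation of an Euler tour of $G^+ = (V, E \cup E^+)$ routes $e$ and its neighbours, ruling out cheap detours by appealing to the noncrossing condition and to the convex-hull placement of the endpoints of $e$.

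For part~(1), the graph is already Eulerian, so the entire budget for $E^+$ must be spent on restoring weak simplicity. I would use a variant of the subdivision lower bound of Figure~\ref{fig:lowerbounds}(a): two long collinear edges between the same pair of hull vertices, with triangular gadgets anchored at interior vertices. Since $e$ cannot be subdivided, any valid augmentation must supply a rerouting bypass that, aggregated over the anchors, has length approaching $L = \|E\|$, giving ratio $\to 1$. For part~(2), relaxing the Eulerian requirement lets the construction be smaller. I would place odd-degree vertices along $e$ so that any Eulerian augmentation must pair them with detours of total length at least $\|E\|/5$; combined with the weak-simplicity obstruction from part~(1), a balanced construction with $\|E\|=5L$ forcing an augmentation of $6L$ yields the ratio $6/5$. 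For part~(3), I would mirror the upper bound of Theorem~\ref{thm:upperbound}: take $e$ together with many short stubs attached to distinct interior points of $e$, all lying on one side of $e$, so that $\bigcup E$ is connected while $\|E\| \to L$. Each stub anchor is $b$-odd for the bar $b=e$, so any parity-restoring augmentation within a thin neighbourhood of $e$ must cover $e$ one further time, contributing $L$ to $\|E^+\|$; and, by the same reasoning as in the proof of Theorem~\ref{thm:subdivision}, any weakly simple perturbation of the augmented graph must additionally include two collinear paths along $e$ (playing the role of $\mathcal{O}^+$ and $\mathcal{O}^-$) for another $2L$, yielding $\|E^+\| \geq 3L - o(L) \approx 3\|E\|$.

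The main obstacle is the lower bound in part~(3): ruling out an augmentation that escapes the thin neighbourhood of $e$ and stitches the $b$-odd anchors together more cheaply through the complement of the bar. I would argue that, because all stubs lie on one side of $e$ and the endpoints of $e$ are on $\mathrm{ch}(V)$, any such external rerouting forces either a crossing with an existing edge (violating the noncrossing precondition on $G^+$) or a detour around the hull whose total length exceeds the naive in-neighbourhood solution. Once external reroutings are ruled out, the $3L$ bound follows from the bar-by-bar analysis outlined above, and the same kind of convex-position argument handles parts~(1) and~(2).
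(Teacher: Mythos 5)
Your overall template (a long edge with endpoints on ${\rm ch}(V)$, short gadgets along it, and a routing/parity analysis of any perturbed Euler tour) is the right one, but two of the three parts have genuine gaps. The most serious is part~(3): your construction attaches all the short stubs to \emph{one} side of $e$, and that version does not force the ratio $3$. Indeed, with one-sided stubs you can augment cheaply: add a second copy of each stub and a path of unit edges $v_0v_1,\ldots,v_{n-1}v_n$ alongside $e$; every degree becomes even, and the resulting multigraph has a weakly simple Euler tour (perturb $e$ just below the axis, the unit path just above it, and detour up each doubled stub --- a simple ``comb''). This costs only about $\|E\|$, so your example certifies at most $\lambda(\mathcal{G}_3)\geq 1$. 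The paper's construction places stubs on \emph{both} sides of $e$, which is what makes a two-strand solution impossible: since $e$ cannot be subdivided, any perturbation of it is a single arc separating the slab, and reaching stubs on both sides forces at least four tour strands across almost every vertical slab $i+\delta\le x\le i+1-\delta$ (the number of crossings of a vertical line with a simple closed curve is even, and with only two strands the stubs on one side between two such slabs become unreachable). That yields $\|E\cup E^+\|\geq (4n-2)(1-2\delta)\to 4\|E\|$. Relatedly, your lower-bound reasoning for part~(3) is circular: you argue that the augmentation ``must cover $e$ one further time'' and ``must include two collinear paths playing the role of $\mathcal{O}^+$ and $\mathcal{O}^-$,'' i.e., you assume the optimal $E^+$ has the structure of the upper-bound construction of Theorem~\ref{thm:upperbound}. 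A lower bound must rule out \emph{all} augmentations, which is exactly what the even-crossing slab count does and your mirroring argument does not.

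Part~(2) is essentially missing: you state the target ratio $6/5$ and a vague plan (``place odd-degree vertices so that detours cost $\|E\|/5$'') but give neither a concrete construction nor a mechanism that produces that constant. The paper's proof of this case needs a dedicated gadget (a red hull edge, a black path alternating between the two sides of it, and short attached paths) together with a structural lemma (Lemma~\ref{lem:3gamma}): because the red edge joins two hull vertices, any weakly simple Euler tour decomposes into three weakly simple polygons $\gamma_1,\gamma_2,\gamma_3$ split by that edge, and the $6/5$ comes from summing per-slab length charges separately inside $\gamma_1$ and inside $\gamma_2\cup\gamma_3$. Nothing in your sketch supplies this decomposition or the accounting. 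Part~(1) is closest to correct --- your construction matches Figure~\ref{fig:lowerbounds}(a) --- but ``the augmentation must supply a rerouting bypass of length approaching $\|E\|$'' is an assertion, not an argument; here too the actual proof is the parity-of-crossings count per slab, showing at least four strands in all but one slab and hence $\|E\cup E^+\|\to 2\|E\|$.
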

\begin{proof}
{\bf (1)} Refer to Figs.~\ref{fig:lowerbounds}(a)--(c).
Let $n\in \mathbb{N}$ and $\delta\in (0,\frac13)$.
Place vertices $v_i=(i,0)$, for $i=0,\ldots , n$, on the $x$-axis.
A red edge of length $n$ connects $v_0$ and $v_n$.
A black edge of length $1/n$ connects $v_{i-1}$ and $v_i$ for $1\leq i\leq n$.
A small cycle of length $\delta<\frac13$ is placed on each $v_i$, $1\leq i\in n-1$, on each side of the $x$-axis.
The total length of the construction is $\|E\|=2n+2(n-1)\delta$.
	
Let $G^+=(V,E\cup E^+)$ be a multigraph in which $P$ is a weakly simple Euler tour;
and let $P'$ be an $\eps$-perturbation into a simple polygon, for some $0<\eps<\delta$.
We define a pair of vertical lines $\ell_i^-:x=i+\delta$ and $\ell_i^+:x=(i+1)-\delta$,
for $0\leq i\leq n-1$. The portion of $P'$ between any two of these lines consists of disjoint
paths whose endpoints are on the lines. By Morse theory, $P'$ contains an even number of paths between any
two of these lines; and the length of such a path is at least the distance between the parallel lines.
The input edges already contain two line segments between any two of these lines: a red and a black segment.
	
We claim that $G^+$ contains at least 4 paths between $\ell_i^-$ and $\ell_i^+$ for all but at most one index $0\leq i\leq n-1$.
Indeed, suppose that there are two such paths between $\ell_i^-$ and $\ell_i^+$ and between $\ell_j^-$ and $\ell_j^+$ ($0\leq i <j\leq n-1$).
We may assume w.l.o.g. that the black edge is above the red edge between $\ell_i^-$ and $\ell_i^+$.
Then the black edge must be above the red edge between $\ell_j^-$ and $\ell_j^+$, as well.
Consequently, $P'$ cannot reach the small cycles at $v_{i+1},\ldots , v_j$. This confirms the claim.
It follows that $\|E\cup E^+\|\geq (4n-2)(1-2\delta)$. This lower bound tends to $2\|E\|$ as
$n\rightarrow \infty$ and $\delta n\rightarrow 0$.

Due to space restrictions, we omit the proofs for cases 2 and 3.
\end{proof}

\section{Conclusions}
\label{sec:conclusions}

We have shown that WSPR is NP-complete.
It follows that the decision version of the problems in Section~\ref{sec:bounds} are also NP-complete:
It is NP-complete to find up to $k$ subdivision points to form a weakly simple polygon, or to find an edge set with length up to $k$ that produce a weakly simple polygon.
We have shown that $\Theta(|E|)$ subdivision points are always sufficient and sometimes necessary when the input is Eulerian; and new edges of length $\Theta(\|E\|)$ are always sufficient and sometimes necessary when $\bigcup E$ is connected. However, the best constant coefficients are not known in most cases. We conjecture that every noncrossing Eulerian graph $G=(V,E)$ can be augmented into a graph $G^+=(V,E\cup E^+)$ that admits a weakly simple Euler tour such that $\|E^+\|\leq \|E\|$.

If the segments in $E$ do not form a weakly simple polygon, we can subdivide segments or insert new segments to create a weakly simple polygon. On the other end of the spectrum, a set of $n$ line segment may form an exponential number of weakly simple polygons, even if all segments are collinear.
It is an open problem to count exactly how many weakly simple polygons can be obtained from the same set of line segments.
Finally, we mention an open problem about reconstructing \emph{simple polygons} from a subset of its edges. It is NP-complete to decide whether a geometric graph $G=(V,E)$ can be augmented into a \emph{simple} polygon $P=(V,E\cup E^+)$~\cite{Rap89}. However, it is not known whether the problem remains NP-hard when $G$ is a perfect matching.

\smallskip\noindent{\bf Acknowledgements.} Research on this paper was partially supported by NSF awards CCF-1422311 and CCF-1423615. The first author was supported by the Science Without Borders program. We thank Adrian Dumitrescu for bringing this problem to our attention.




\begin{thebibliography}{10}

\bibitem{AAET16}
Hugo A. Akitaya, Greg Aloupis, Jeff Erickson, and Csaba~D. T{\'o}th.
\newblock Recognizing weakly simple polygons.
\newblock In \emph{Proc. 32nd Sympos. Comput. Geom.}, LIPIcs 51, \#8, Dagstuhl, 2016.


\bibitem{BGS06}
Gill Barequet, Craig Gotsman, and Avishay Sidlesky.
\newblock Polygon reconstruction from line cross-sections.
\newblock In {\em Proc. 18th Canadian Conf. Comput. Geom.}, Kingston, ON, 2006.

\bibitem{Dutch}
Mark~de Berg, Otfried Cheong, Marc~van Kreveld, and Mark Overmars.
\newblock {\em Computational Geometry: Algorithms and Applications}.
\newblock Springer-Verlag, Berlin, 3rd edition, 2008.

\bibitem{BK12}
Mark de~Berg and Amirali Khosravi.
\newblock Optimal binary space partitions for segments in the plane.
\newblock {\em Internat. J. Comput. Geom. Appl.}, 22(3):187--205, 2012.

\bibitem{BDS11}
Therese~C. Biedl, Stephane Durocher, and Jack Snoeyink.
\newblock Reconstructing polygons from scanner data.
\newblock {\em Theoret. Comput. Sci.}, 412(32):4161--4172, 2011.

\bibitem{CEX15}
Hsien-Chih Chang, Jeff Erickson, and Chao Xu.
\newblock Detecting weakly simple polygons.
\newblock In {\em Proc. 26th ACM-SIAM Symposium on Discrete Algorithms},
pp.~1655--1670, SIAM, 2015.

\bibitem{Cha91}
Bernard Chazelle.
\newblock Triangulating a simple polygon in linear time.
\newblock {\em Discrete Comput. Geom.}, 6(3):485--524, 1991.

\bibitem{CDP+09}
Pier Francesco Cortese, Giuseppe Di Battista, Maurizio Patrignani, and Maurizio Pizzonia,
On embedding a cycle in a plane graph,
\emph{Discrete Math.}, 309(7):1856--1869, 2009.


\bibitem{DDD16}
Andreas Darmann, Janosch D\"ocker, and Britta Dorn.
On planar variants of the monotone satisfiability problem with bounded variable appearances.
\newblock Preprint, \texttt{arXiv:1604.05588}, 2016.

\bibitem{DMW11}
Yann Disser, Mat\'u\v{s} Mihal\`ak, and Peter Widmayer.
\newblock A polygon is determined by its angles.
\newblock \emph{Comput. Geom. Theory Appl.}, 44(8):418--426, 2011.

\bibitem{FW90}
Michael Formann and Gerhard~J. Woeginger.
\newblock On the reconstruction of simple polygons.
\newblock {\em Bulletin {EATCS}}, 40:225--230, 1990.


\bibitem{Hie73}
Carl Hierholzer and Chr Wiener.
\newblock {\"U}ber die M{\"o}glichkeit, einen Linienzug ohne Wiederholung und
  ohne Unterbrechung zu umfahren.
\newblock {\em Mathematische Annalen}, 6(1):30--32, 1873.

\bibitem{HT03}
Michael Hoffmann and Csaba~D. T{\'{o}}th.
\newblock Segment endpoint visibility graphs are {H}amiltonian.
\newblock {\em Comput. Geom.}, 26(1):47--68, 2003.

\bibitem{JW93}
Klaus Jansen and Gerhard~J. Woeginger.
\newblock The complexity of detecting crossingfree configurations in the plane.
\newblock {\em BIT Numerical Math.}, 33(4):580--595, 1993.

\bibitem{OR88}
Joseph O'Rourke.
\newblock Uniqueness of orthogonal connect-the-dots.
\newblock In Godfried T. Toussaint, editor,
{\em Computational Morphology}, pp.~97--104, Elsevier, Amsterdam, 1988.

\bibitem{Rap89}
David Rappaport.
\newblock Computing simple circuits from a set of line segments is NP-complete.
\newblock {\em {SIAM} J. Comput.}, 18(6):1128--1139, 1989.

\bibitem{Tot12}
Csaba~D. T{\'{o}}th.
\newblock Connectivity augmentation in planar straight line graphs.
\newblock {\em European J. Combin.}, 33(3):408--425, 2012.

\bibitem{UW92}
Masatsugu Urabe and Mamoru Watanabe.
\newblock On a counterexample to a conjecture of {M}irzaian.
\newblock \emph{Comput. Geom. Theory Appl.}, 2(1):51--53, 1992.

\end{thebibliography}


\newpage
\appendix

\section{Omitted proofs}

\begin{proof}[Proof of Lemma 6]
	The reduction to Directed-WSPR is very similar to the one described in the proof of Lemma~\ref{lem:undirected}.
	Figure~\ref{fig:directed-hardness}(a) shows an example of the reduction with the edges on the variable line distorted to show the overlapping segments.
	Black disks circled by dotted ellipses represent the same vertex. The key difference is the addition of four segments colored blue in Figure~\ref{fig:directed-hardness}.
	
	\begin{figure}[h]
		\centering
		\includegraphics[width=0.6\linewidth]{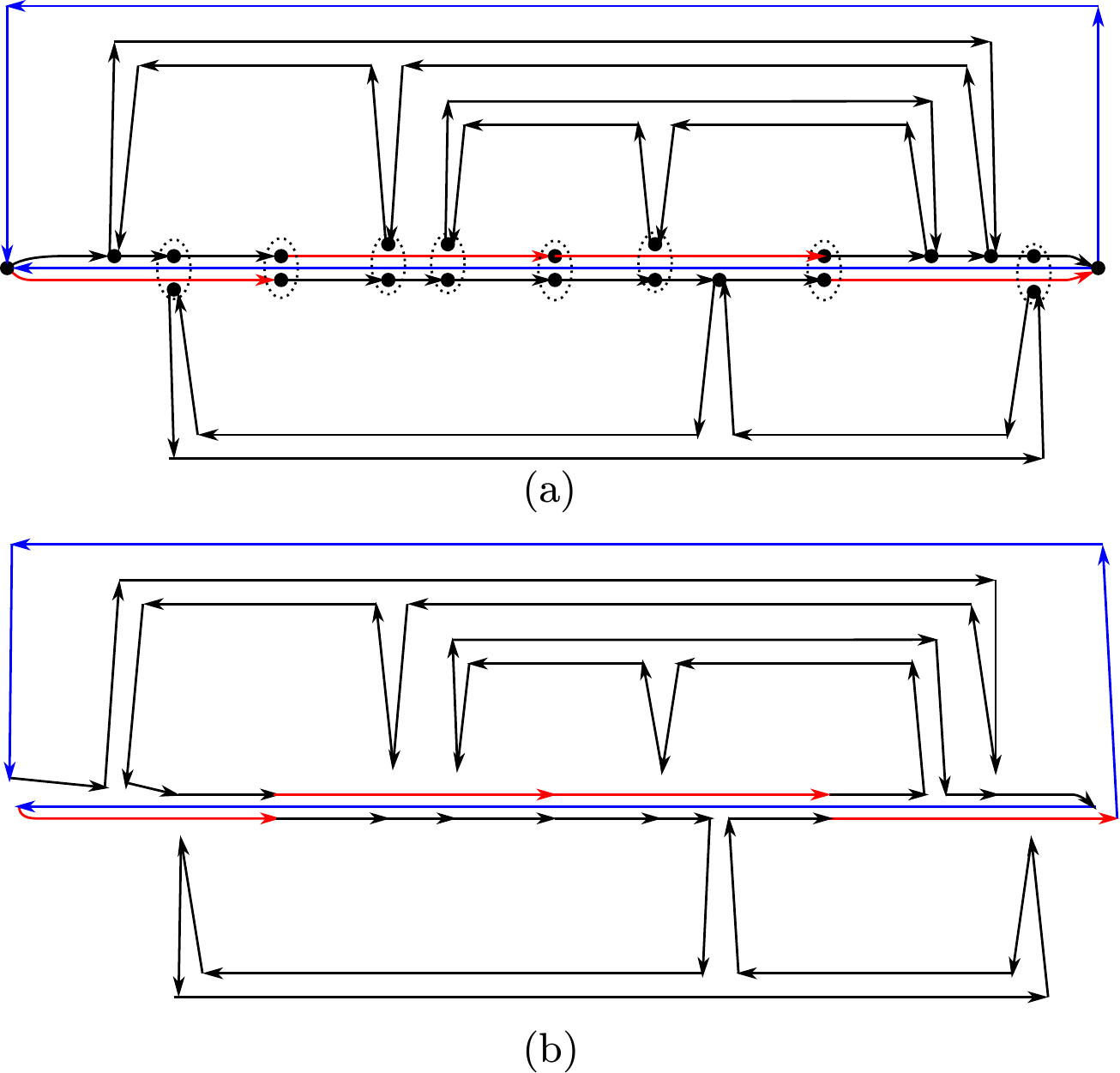}
		\caption{(Left) Reduction from \textsc{Planar-Monotone-3SAT} to \textsc{Directed-WS} and (right) a possible solution. }
		\label{fig:directed-hardness}
	\end{figure}
	
	All segments of the variable gadgets are directed to the right and the clause gadgets form clockwise cycles.
	If the \textsc{Planar-Monotone-3SAT} have a positive solution, we can construct a weakly simple polygon as shown in Figure~\ref{fig:directed-hardness}(b) in a similar way as described in the proof of Lemma~\ref{lem:undirected}.
	The difference is that we place a blue edge in between the black and red paths of the variable gadgets and connect $v_n$ and $v_0$ with a blue path above the construction.
	Analogously to the proof of Lemma~\ref{lem:undirected}, if there is a positive solution for the Directed-WSPR instance, all edges of the black path of a variable gadget are on the same side of the red path.
	Then we can assign \texttt{true}/\texttt{false} values as previously explained, and obtain a satisfying assignment.
\end{proof}

\begin{proof}[Proof of correctness (Theorem~\ref{thm:upperbound})]
	Here we show that $G^+$ admits a weakly simple Euler tour. Note that an $\frac{\eps}{2}$-perturbation of an $\frac{\eps}{2}$-perturbation is an $\eps$-perturbation.
	We shall construct an $\frac{\eps}{2}$-perturbation of $G^+$ into an Eulerian \emph{geometric graph}, and then Theorem~\ref{thm:GeomGraph} yields and $\frac{\eps}{2}$-perturbation into a simple polygon for all $\eps>0$.
	
	For each bar $b$ with vertices $u_0,\ldots , u_k$, we perturb the edges of $E\cup E^+$ contained in $b$ into simple polygons: Let $\beta(b)$ be the circuit formed by two copies of the path $(u_0,\ldots, u_k)$, and embed $\beta(b)$ on the boundary of the region $D_b$ such that $u_0$ is on the boundary of $D_{u_0}$ and $u_k$ is on the boundary of $D_{u_k}$. The components of $G(b)$ each form weakly simple polygons that we denote by $\gamma_1(b),\ldots ,\gamma_{\ell(b)}$, sorted by the $x$-coordinates of their left endpoints. Perturb $\gamma_1(b),\ldots ,\gamma_{\ell(b)}(b)$ into simple polygons that lie in the interior of $D_b$, separated by lines parallel to $b$, in this linear order (ties are broken arbitrarily). For $i=0,\ldots , k$, consider all polygons $\gamma_j(b)$ whose leftmost vertex if $u_i$. 
Connect the left endpoints of each such $\gamma_j(b)$
to the copy of $u_i$ in the upper arc of $\beta(b)$ by two polygonal paths within $D_{u_i}$ (these paths connect different copies of vertex $u_i\in V$). Similarly, if bars $b$ and $b'$ intersect at some vertex $v=u_i=u_j'$, then connect the corresponding vertices in $\beta(b)$ and $\beta(b')$ by two polygonal paths within the disk $D_v$. These connector paths augment the pairwise disjoint polygons $\beta(b)$, $\gamma_1(b),\ldots, \gamma_{\ell(b)}(b)$, for all bars $b$, into a simple Eulerian noncrossing geometric graph $\widehat{G}$.
	
	Theorem~\ref{thm:GeomGraph} completes the proof: An Euler tour $\widehat{P}$ of $\widehat{G}$ can be perturbed into a simple polygon $P$ such that $\dist_F(P,\widehat{P})< \frac{\eps}{2}$.
	The Euler tour $\widehat{P}$ of $\widehat{G}$ maps to an Euler tour $P^+$ of $G^+$ by identifying the vertices that lie in the same disk $D_v$, $v\in V$; and $\dist_F(\widehat{P},P^+)< \frac{\eps}{2}$.
\end{proof}

The following lemma provides a canonical form of a weakly simple Euler tour when an edge connects two points on the boundary of the convex hull of $V$.

\begin{lemma}\label{lem:3gamma}
	Let $G=(V,E)$ be a multigraph that admits a weakly simple Euler tour and an edge $ab\in E$ where $u$ and $v$ are on the boundary of ${\rm ch}(V)$. Then $G$ admits a weakly simple Euler tour composed of 3 weakly simple polygons: $\gamma_1$ contains all edges that lie in one of the open halfplanes bounded by $uv$;
	$\gamma_2\cup \gamma_3$ contains all edges in the other open halfplane, $\gamma_2$ is incident to $u$, and $\gamma_3$ is incident to $v$.
\end{lemma}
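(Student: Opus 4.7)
The plan is to take an arbitrary weakly simple Euler tour of $G$ and rearrange it so that it splits into three sub-circuits with the stated structure. Let $\ell$ denote the line through $u$ and $v$ and let $H^+, H^-$ denote the two open halfplanes bounded by $\ell$. Since $G$ is noncrossing and $u,v$ lie on the boundary of the convex hull of $V$, no edge of $E$ crosses $\ell$ in its interior; every edge lies in $\overline{H^+}$, in $\overline{H^-}$, or entirely on $\ell$. Partition the edges accordingly as $E = E^+ \sqcup E^- \sqcup E^0$ with $uv \in E^0$, and write $G^+ = (V, E^+)$ and $G^- = (V, E^-)$.

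The first step is a parity argument. Every $w \in V$ has even degree in $G$. If $w$ lies strictly inside $H^+$ (resp., $H^-$), all its incident edges are in $E^+$ (resp., $E^-$), so $w$ has even degree in $G^+$ (resp., $G^-$). Consequently the only possibly odd-degree vertices of $G^+$ lie on $\ell$, and the handshake lemma forces them to come in pairs. A direct computation using the identity $d_w = d_w^+ + d_w^- + d_w^0$ and the evenness of $d_w$ shows that $u$ and $v$ have the same parity of $G^+$-degree. By relabelling $H^+ \leftrightarrow H^-$ and distributing the collinear edges of $E^0 \setminus \{uv\}$ into either side as needed, I may assume $d_u^+$ and $d_v^+$ are both even. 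Then $G^+$ is Eulerian, and symmetrically so is $G^\star = (V, E^- \cup \{uv\})$.

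Next I would construct the three polygons. Applied to a small planarization of $G^+$ (breaking overlapping collinear edges into short edge-disjoint parallel arcs in a narrow strip above $\ell$), Corollary~\ref{cor:GeomGraph} yields a weakly simple Euler tour $\gamma_1$ of $G^+$. An analogous construction on $G^\star$ produces a weakly simple Euler tour $\hat T$. Rotate $\hat T$ so that it begins at $u$ with the edge $uv$; at the first visit to $v$, extract the maximal closed sub-walk $\gamma_3$ of $\hat T$ that uses only edges in $E^-$ and ends when $v$ is reached again, and let $\gamma_2 = \hat T$ with $\gamma_3$ excised. Then $\gamma_2$ is a closed walk at $u$ containing the edge $uv$, $\gamma_3$ is a closed walk at $v$, and both are weakly simple as sub-walks of $\hat T$. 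Splicing $\gamma_1$ into $\gamma_2$ at a shared on-$\ell$ vertex (which exists because $G$ is connected) gives a weakly simple Euler tour of $G$ in the desired form.

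The main obstacle is connectivity: the above construction requires $G^+$ to be connected so that its Eulerian circuit is a single polygon, and similarly for $G^\star$. In general $G^+$ could split into several components sitting above $\ell$ that do not meet at any $\ell$-vertex (e.g., when $d_u^+ = d_v^+ = 0$). Here I would exploit the existence of a weakly simple Euler tour of $G$: in a simple-polygon perturbation, the $H^+$-portions form arcs whose endpoints lie close to $\ell$, and these arcs nest within a single region of the perturbed polygon above $\ell$, which supplies a natural order in which the components of $G^+$ can be visited and merged into a single weakly simple polygon by successive attachment at $\ell$-vertices (as in Algorithm~A). The degenerate cases---$d_u^+ = 0$, $d_v^+ = 0$, or $uv$ being a convex-hull edge so that one of the halfplanes contains no other vertices of $V$---are handled in the same way, with $\gamma_1$ possibly empty and the splicing vertex taken to be any shared vertex on $\ell$ or, in the last case, $\gamma_1$ simply omitted.
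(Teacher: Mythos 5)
There is a genuine gap, and it sits at the heart of your construction of $\gamma_1$: you cannot obtain a weakly simple Euler tour of $G^+$ by ``planarizing'' it and invoking Corollary~\ref{cor:GeomGraph}. That corollary applies only to geometric multigraphs (no partial overlaps, no vertex interior to an edge), whereas $G^+$ is an arbitrary noncrossing multigraph in a halfplane: it may contain partially overlapping collinear edges and vertices lying in edge interiors, and for such graphs Eulerian does \emph{not} imply the existence of a weakly simple Euler tour --- Figure~\ref{fig:intro}(c) is exactly such a counterexample. Choosing how to nest and order overlapping edges so that an Euler tour becomes perturbable is the NP-hard core of the paper, so ``breaking overlapping collinear edges into short edge-disjoint parallel arcs'' is not an innocent preprocessing step; it is the whole problem. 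The only way to certify that the halfplane pieces admit weakly simple tours is to extract them from the tour of $G$ that the hypothesis provides, and your argument consults that hypothesis only for connectivity, not for simplicity. A second gap is the Eulerianity of $G^+$: your parity discussion fixes degrees only at $u$ and $v$, but a vertex $w$ in the relative interior of the segment $uv$ can have odd up-degree (e.g., one edge into each open halfplane), the handshake lemma ``pairing up'' odd vertices does not make them even, and nothing in the identity $d_w=d_w^++d_w^-+d_w^0$ forces $d_u^+$ and $d_v^+$ to have the same parity. Repairing this again requires the hypothesis: one must first argue that no visit of the tour to an interior point of $uv$ can switch open halfplanes, because its perturbation would have to cross the perturbation of the edge $uv$ and cannot escape around $u$ or $v$, which lie on the hull boundary.

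For contrast, the paper never rebuilds tours from scratch: it cuts the given weakly simple Euler tour $P$ at every occurrence of $u$ and $v$, observes that each resulting piece lies in one closed halfplane (by the crossing argument above), splits the $u$--$v$ pieces evenly between the two sides (using that the edge $uv$ lies on the line and can be assigned to either side), and regroups the pieces into $\gamma_1,\gamma_2,\gamma_3$; weak simplicity of each $\gamma_i$ is then inherited from the perturbation of $P$ rather than re-proved. If you route your decomposition-first approach through the given tour to fill the gaps above, you will essentially have reproduced that argument.
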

\begin{proof}
	Assume w.l.o.g. that $ab$ is horizontal. Let $P$ be an arbitrary weakly simple Euler tour of $G$. Cut $P$ at the vertices $u$ and $v$ into paths $\mathcal{P}=\{P_1,\ldots ,P_k\}$.
	Note that every path in $\mathcal{P}$ consists of edges that lie in one closed halfplane bounded by $uv$, since every subpath of $P$ between the two open halfplanes must go through $u$ or $v$.
	
	Clearly, the number of paths in $\mathcal{P}$ between $u$ and $v$ is even, and the path $(u,v)\in \mathcal{P}$ lies in both closed halfplanes. Therefore, there is a partition $\mathcal{P}=\mathcal{P}^-\cup \mathcal{P}^+$ such that every path in $\mathcal{P}^-$ (resp., $\mathcal{P}^+$) lies in the closed halflane below (resp., above) $uv$; and $\mathcal{P}^-$ and $\mathcal{P}^+$ each contain an even number of paths between $u$ and $v$. Assume w.l.o.g. that $(u,v)\in \mathcal{P}^+$. Then the paths in $\mathcal{P}^+$ form a weakly simple polygon. The paths in $\mathcal{P}^-$ incident to $u$ also form a weakly simple polygon; and the remaining paths in $\mathcal{P}^-$ (all incident to $v$ only) also form a weakly simple polygon, as claimed.
\end{proof}

\begin{proof}[Proof of cases (2) and (3) (Theorem~\ref{thm:lowerbounds})]
\smallskip\noindent{\bf (3)}
The construction is similar to case (1) above. Refer to Figs.~\ref{fig:lowerbounds}(d)--(e).
Let $n\in \mathbb{N}$ and $\delta\in (0,\frac13)$.
Place vertices $v_i=(i,0)$, for $i=0,\ldots , n$, on the $x$-axis.
A red edge of length $n$ connects $v_0$ and $v_n$.
A small vertical edge of length $\delta$ is attached to each $v_i$, $1\leq i\in n-1$, on each side of the $x$-axis.
The total length of the construction is $\|E\|=n+2(n-1)\delta$.

Let $G^+=(V,E\cup E^+)$ be a multigraph in which $P$ is a weakly simple Euler tour;
and let $P'$ be an $\eps$-perturbation into a simple polygon, for some $0<\eps<\delta$.
Consider the vertical lines $\ell_i^-:x=i+\delta$ and $\ell_i^+:x=(i+1)-\delta$, for $0\leq i\leq n-1$, defined above.
Analogously to case (1), $G^+$ contains at least 4 path between $\ell_i^-$ and $\ell_i^+$ for all but
at most one index $0\leq i\leq n-1$. It follows that $\|E\cup E^+\|\geq (4n-2)(1-2\delta)$.
This lower bound tends to $4\|E\|$ as $n\rightarrow \infty$ and $\delta n\rightarrow 0$.

\smallskip\noindent{\bf (2)}
Refer to Figs.~\ref{fig:lowerbounds}(f)--(g).  Let $n\in \mathbb{N}$ be even, and $\delta\in (0,\frac13)$.
Place vertices $v_i=(i,0)$, for $i=0,\ldots , 2n$, on the $x$-axis. For $i=0,\ldots , 2n$, let
$v^+_i=(i,\delta)$, $v^{++}_i=(i,2\delta)$,  $v^-_i=(i,-\delta)$, and $v^{--}_i=(i,-2\delta)$.
Vertices $v_0$ and $v_{2n}$ are connected by a red edge $v_0v_n$ of length $2n$, and a black path
$(v_0,v^+_0,v^+_2,v_2,v^-_2,v^-_4,v^-_6,\ldots v_{2n}$, that alternated between the two sized of the red edge.
Finally, for every $i=1,\ldots ,n$, we have a path $(v_{2i}^+,v_{2i}^{++},v_{2i-1}^{++}, v_{2i-1}^{+}, v_{2i-1}, v_{2i-1}^-$
when $i$ is odd, and a reflection of this path in the $x$-axis when $i$ is even.
The total length of the construction is $\|E\|=5n+6n\delta$.

Let $G^+=(V,E\cup E^+)$ be a multigraph in which $P$ is a weakly simple Euler tour.
The red edge connects two points on the boundary of ${\rm ch}(V)$, we can apply Lemma~\ref{lem:3gamma}, and
$P$ can be decomposed into three weakly simple tours: W.l.o.g., $\gamma_1$ contains the red edge and all edges of $P$ above
the $x$-axis; $\gamma_2$ (resp., $\gamma_3$) is incident to $v_0$ (resp., $v_{2n}$) and contains edges below the $x$-axis.

Let us consider the edges of $E^+\cap \gamma_1$. For every odd integer $1\leq i\leq n-1$, the edge $v_{2i-1}v_{2i-1}^+$ is in $\gamma_1$,
its two adjacent edges are in $E^+\cap \gamma_1$, and their length between the lines $x=2i-2$ and $x=2i$ is at least \boxed{2}.
Similarly, edge $v_{2i-1}^+v_{2i-1}^{++}$ is in $\gamma_1$, at least one of its adjacent edges is in $E^+\cap \gamma_1$,
and its length between the lines $x=2i-2$ and $x=2i$ is at least \boxed{1}. Every vertical line crosses the edges of $\gamma_1$ an even number of times, there are edges in $E^+\cap \gamma_1$ of length at least \boxed{2} between the lines $x=2i+2$ and $x=2i+4$.
Summation over all odd indices $i=1,\ldots ,n-1$ yields $\|E^+\cap \gamma_1\|\geq 5\cdot\frac{n}{2}$.

Consider now $E^+\cap (\gamma_2\cup \gamma_3)$. For every odd integer $1\leq i\leq n-1$, the edge $v_{2i+1}v_{2i+1}^-$ is in $\gamma_2\cup \gamma_3$, its two adjacent edges are in $E^+\cap (\gamma_2\cup \gamma_3)$, and their length between the lines $x=2i$ and $x=2i+2$ is at least \boxed{2}.
Similarly, edge $v_{2i+1}^-v_{2i+1}^{--}$ is in $\gamma_2\cup \gamma_3$, at least one of its adjacent edges is in $E^-\cap (\gamma_2\cup \gamma_3)$, and its length between the lines $x=2i$ and $x=2i+2$ is at least \boxed{1}. If a vertical line crosses $\gamma_2\cup \gamma_3$, it must cross it an even number of times. There are edges in $E^+\cap( \gamma_1$ of length at least \boxed{4} between the lines $x=2i-2$ and $x=2i$ for all but at most one odd index $i=1,\ldots n-1$. Summation over all odd indices $1\leq i\leq n-1$ yields $\|E^+\cap (\gamma_2\cap \gamma_3)\|\geq 7\cdot \frac{n}{2}-2$.
Overall, we have $\|E^+\|\geq (5+7)\frac{n}{2}-2=6n-2$. This lower bound tends to $\frac65 \|E\|$ as $n\rightarrow \infty$ and $\delta n\rightarrow 0$.
\end{proof}

\end{document}